\newtheorem{theorem}{Theorem}[chapter]
\newtheorem{definition}[theorem]{Definition}
\newtheorem{corollary}[theorem]{Corollary}
\newtheorem{lemma}[theorem]{Lemma}
\newtheorem{remark}[theorem]{Remark}
\newcommand{\mm}{\bm{M}}
\newcommand{\R}{\mathbb{R}}
\DeclareMathOperator*{\argmin}{argmin}
\newcommand{\bs}[1]{\boldsymbol{{#1}}}
\renewcommand{\sp}[1]{{\rm H}_k({#1})}
\newcommand{\as}{\bs{\alpha}}
\newcommand{\A}{\bs{\Phi}}
\newcommand{\f}{\bs{f}}
\newcommand{\m}{M}
\newcommand{\n}{N}
\newcommand{\e}{\bs{e_M}}
\newcommand{\sas}{\bs{\alpha^*}}
\newcommand{\has}{\bs{\hat{\alpha}}}
\newcommand{\inner}[2]{\langle {#1},{#2}\rangle}
\renewcommand{\mm}{\m}
\renewcommand{\R}{\mathbb{R}}
\renewcommand{\P}{\bs{\rm P}}
\newcommand{\Q}{\bs{\rm Q}}
\newcommand{\simp}{\Xi}
\newcommand{\breg}{{\cal B}_{\F}}
\newcommand{\F}{{\cal R}}
\newcommand{\proj}{{\cal P}_\Omega}
\newcommand{\muse}{{\rm GAME} }
\newcommand{\Lo}{{\cal L}}
\mathchardef\ordinarycolon\mathcode`\:
\newenvironment{abstract}
 {\begin{list}{}
    {\setlength{\leftmargin}{0em}
     \setlength{\rightmargin}{0em}
     \setlength{\itemsep}{0pt}
     \setlength{\topsep}{0pt}}
    \item[] \em }
 {\end{list}\par\vspace{8ex}\par}
\newenvironment{authors}
 {\vspace{-4ex}\begin{list}{}
    {\setlength{\leftmargin}{-.5\marginpartotal}
     \setlength{\rightmargin}{0pt}
     \setlength{\itemsep}{0pt}
     \setlength{\topsep}{0pt}}
    \item[] }
 {\end{list}\par\vspace{8ex}\par}
\def\AUname#1{\par\makebox[2.8in][l]{\bf #1}}
\def\AUemail#1{{\tt #1}}
\def\AUweb#1{}
\def\AUaffiliation#1{\\\emph{#1}}
\def\AUaddress#1{\\\emph{#1}\par\medskip}
\renewcommand\appendix{\par%
  \setcounter{section}{0}%
  \setcounter{subsection}{0}%
  \section*{Appendix}%
  \def\thesection{\thechapter.\@Alph\c@section}%
  \def\thesubsection{\thechapter.\@Alph\c@subsection}}
\begin{document}
\raggedbottom

%

\renewcommand{\sp}[1]{{\rm H}_k({#1})}

\newcommand{\obs}{\f}
\newcommand{\sens}{\A}
\newcommand{\signal}{\bs{\alpha}}
\newcommand{\bestsignal}{\bs{\alpha}^\ast}

\newcommand{\dimension}{\n}
\newcommand{\numsam}{\m}
\newcommand{\sparsity}{k}
\newcommand{\xtrue}{\as^\ast}
\newcommand{\noise}{\bf{n}}

\newcommand{\constraint}{\mathcal{C}_\sparsity}
\newcommand{\constrainttwo}{\mathcal{C}_{2\sparsity}}
\newcommand{\structsub}{\mathcal{V}_{\constraint}}
\newcommand{\projection}{\mathcal{P}_{\constraint}}
\newcommand{\apprprojection}{\mathcal{P}_{\constraint}^{\epsilon}}

\newcommand{\vectornorm}[1]{\|#1\|}
\newcommand{\vectornormbig}[1]{\Big\|#1\Big\|}
\newcommand{\vectornormmed}[1]{\big\|#1\big\|}

\newcommand{\class}{\textsc{Clash}\xspace}

\pagenumbering{arabic}
\pagestyle{normalheadings}
\setcounter{page}{1}


\begingroup
\long\def\symbolfootnote[#1]#2{\begingroup%
\def\thefootnote{\fnsymbol{footnote}}\footnote[#1]{#2}\endgroup}

\chapter{Linear Inverse Problems with \\Norm and Sparsity Constraints}
\label{chapter:cjk}

\begin{authors}
\AUname{Volkan Cevher}
\AUemail{volkan.cevher@epfl.ch}
\AUaffiliation{Laboratory for Information and Inference Systems}
\AUaddress{Ecole Polytechnique Federale de Lausanne}
\AUname{Sina Jafarpour}
\AUemail{sina2jp@yahoo-inc.com}
\AUaffiliation{Multimedia Research Group}
\AUaddress{Yahoo! Research}
\AUname{Anastasios Kyrillidis}
\AUemail{anastasios.kyrillidis@epfl.ch}
\AUaffiliation{Laboratory for Information and Inference Systems}
\AUaddress{Ecole Polytechnique Federale de Lausanne}\symbolfootnote[1]{Authors are in alphabetical order.}
\end{authors}

\begin{abstract}
We describe two nonconventional algorithms for linear regression, called GAME and CLASH. The salient characteristics of these approaches is that they exploit the convex $\ell_1$-ball and non-convex $\ell_0$-sparsity constraints jointly in sparse recovery. To establish the theoretical approximation guarantees of GAME and CLASH, we cover an interesting range of topics from game theory, convex and combinatorial optimization. We illustrate that these approaches lead to improved theoretical guarantees and empirical performance beyond convex and non-convex solvers alone.
\end{abstract} 
\section{Introduction}\label{sec: intro} 
\textit{Sparse approximation} is a fundamental problem in compressed sensing \cite{CRT1,donoho}, as well as in many other signal processing and machine learning applications including variable selection in regression \cite{lasso,lasso2,lasso3}, graphical model selection \cite{graphgame2,graphgame1}, and sparse principal component analysis \cite{spca1,spca2}. In sparse approximation, one is provided with a dimensionality reducing measurement matrix $\A \in \mathbb{R}^{\mm\times \n}$ ($\m< \n$), and a low dimensional vector $\f \in \R^\mm$ such that:
\begin{equation}\label{eq:obv}
  \obs = \A \xtrue + \noise,
\end{equation} where $ \xtrue \in \mathbb{R}^\n $ is the high-dimensional signal of interest and $ \mathbf{n} \in \mathbb{R}^{\mm} $ is a potential additive noise term with $ \vectornorm{\noise}_2 \leq \sigma $. 

In this work, we assume $ \xtrue $ is a $ k $-sparse signal or is sufficiently approximated by a $ k $-sparse vector. The goal of sparse approximation algorithms is then to find a sparse vector $\has \in \mathbb{R}^\n$ such that $\A\has-\f$ is small in an appropriate norm. In this setting, the $ \ell_0 $-minimization problem emerges naturally as a suitable solver to recover $ \xtrue $ in (\ref{eq:obv}):
\begin{equation}
	\begin{aligned}
	& \underset{\signal\in\mathbb{R}^\dimension}{\text{minimize}}
	& & \vectornorm{\signal}_0
	& \text{subject to}
	& & \vectornorm{\obs - \A \signal}_2 \leq \sigma,
	\end{aligned} \label{opt:00}
\end{equation} where $ \vectornorm{\signal}_0 $  counts the nonzero elements (the sparsity) of $ \signal $.

Unfortunately, solving (\ref{opt:00}) is a challenging task with exponential time complexity. Representing the set of all $ k $-sparse vectors as:
\begin{equation}\label{UoS}
\Delta_{\ell_0}(\sparsity)\doteq \{\as\in\R^\n: \|\as\|_0 \leq \sparsity \},
\end{equation} hard thresholding algorithms \cite{recipes,SP,cosamp,Blumensath_iterativehard,Foucart_hardthresholding} abandon this approach in favor of greedy selection where a putative $ \sparsity $-sparse solution is iteratively refined using local decision rules. To this end, hard thresholding methods consider the following $ \ell_0 $-constrained least squares problem formulation as an alternative to (\ref{opt:00}): \vspace{-0.1cm}
\begin{equation}
	\begin{aligned}
	& \underset{\as \in\mathbb{R}^\dimension}{\text{minimize}}
	& & \vectornorm{\obs - \A \as}_2^2
	& \text{subject to}
	& & \as \in \Delta_{\ell_0}(\sparsity).
	\end{aligned} \label{opt:01}
\end{equation}
These methods feature computational advantages and also are backed up with a great deal of theory for estimation guarantees.  

In contrast, convex optimization approaches change the problem formulations above by ``convexifying'' the combinatorial $ \ell_0 $-constraint with the sparsity inducing convex $ \ell_1 $-norm.\footnote{Note that this is not a true convexification, since the $ \ell_0 $-ball does not have a scale.} As a result, (\ref{opt:00}) is transformed into the $ \ell_1 $-minimization, also known as the Basis Pursuit (BP) problem \cite{BPDN}:
\begin{equation}
	\begin{aligned}
	& \underset{\as\in\mathbb{R}^\n}{\text{minimize}}
	& & \vectornorm{\as}_1
	& \text{subject to}
	& & \vectornorm{\obs - \A \as}_2 \leq \sigma.
	\end{aligned} \label{opt:02}
\end{equation} 
Similarly, the famous Lasso algorithm \cite{tibshirani96regression} can be considered as a relaxation of (\ref{opt:01}): \vspace{-0.1cm}
\begin{equation}
	\begin{aligned}
	& \underset{\as\in \mathbb{R}^\n}{\text{minimize}}
	& & \vectornorm{\obs - \A \as}_2^2
	& \text{subject to}
	& & \as \in \Delta_{\ell_1}(\tau),
	\end{aligned} \label{opt:03}
\end{equation} 
where $ \Delta_{\ell_1}(\tau) $ is the set of all vectors inside the hyper-diamond of radius $\tau$:
\begin{equation}\label{simplex}
\Delta_{\ell_1}(\tau)\doteq \{\as\in\R^\n: \|\as\|_1 \leq \tau \}.
\end{equation}

\begin{figure*}
\begin{minipage}[b]{0.42\linewidth}
\centering
{\subfigure[BP geometry]{\centerline{\resizebox{5cm}{!}{\input{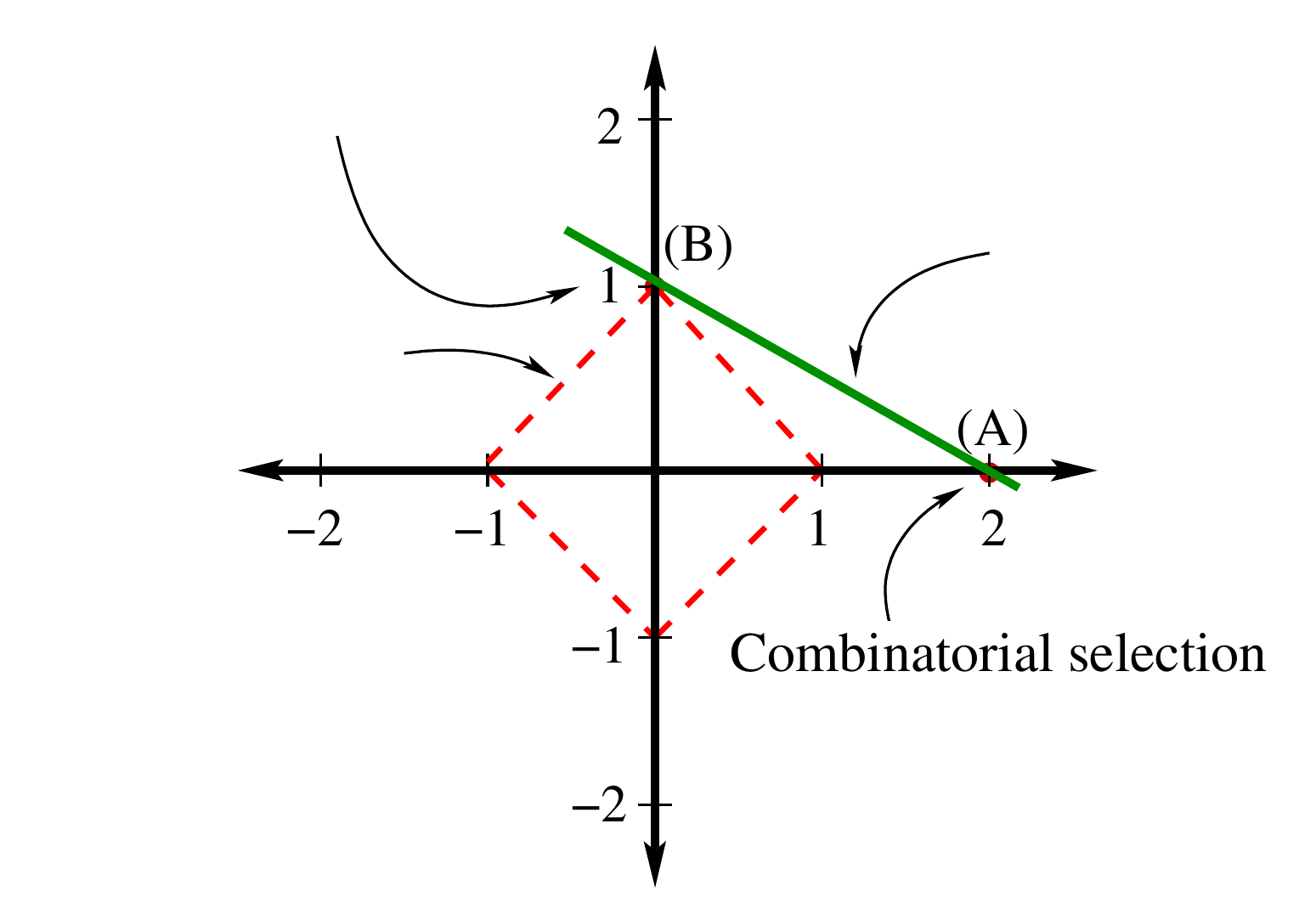_t}}}}}
\end{minipage}
\hspace{0cm}
\begin{minipage}[b]{0.42\linewidth}
\centering
{\subfigure[Lasso geometry ]{\centerline{\resizebox{4.2cm}{!}{\input{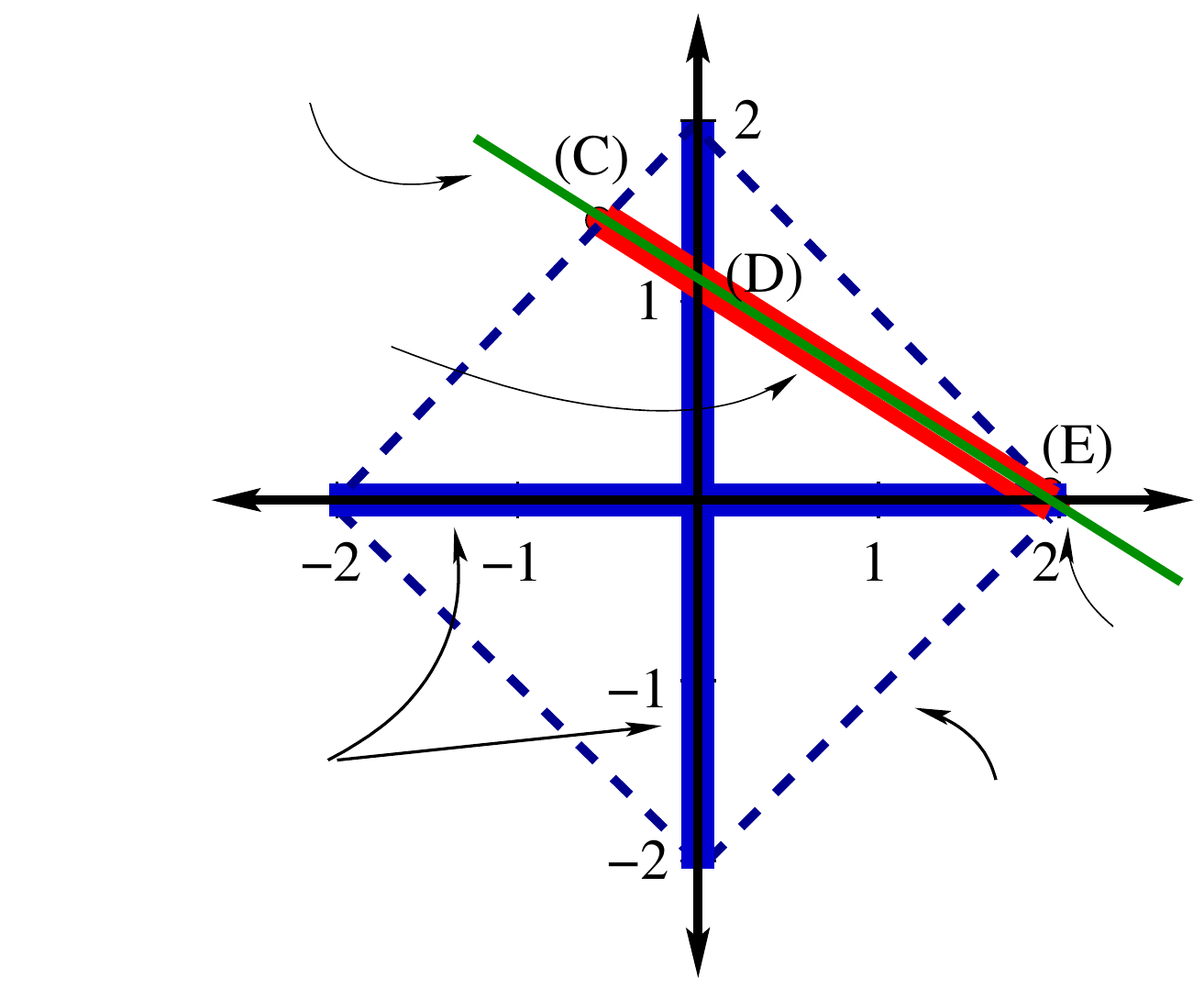_t}}}}}
\end{minipage}
\caption{\small\sl Geometric interpretation of the selection process for a simple test case $ \obs = \A \bestsignal $ where $ \vectornorm{\bestsignal}_0 = 1 $. } \label{fig: grades}
\end{figure*}

While both convex and non-convex problem formulations can find the true problem solution under various theoretical assumptions, one can easily find examples in practice where either one can fail. Borrowing from \cite{normpursuits}, we provide an illustrative example in $ \mathbb{R}^2 $ for the noiseless case in Fig. 1.1. In (\ref{opt:00}), combinatorial-based approaches can identify the admissible set of 1-sparse solutions. If a greedy selection rule is used to arbitrate these solutions, then such an approach could pick (A). In contrast, the BP algorithm selects a solution (B), and misses the candidate solution (A) as it cannot exploit prior knowledge concerning the discrete structure of $ \bestsignal $.

To motivate our discussion in this book chapter, let us assume that we have the {\it true model parameters} $ \| \bestsignal \|_0 = \sparsity $ and $ \| \bestsignal \|_1 = \tau $. Let us then consider geometrically the---unfortunate but common---case where the kernel of $ \A $, $ {\rm ker}(\A) $, intersects with the tangent cone $ T_{\|\signal\|_1 \leq \tau}(\bestsignal) = \big \lbrace s(\mathbf{y} - \bestsignal): \|\mathbf{y}\|_1 \leq \tau \text{ and } s \geq 0 \big \rbrace $ at the true vector $ \bestsignal $ (cf., (E) in Fig. 1.1(b)). From the Lasso perspective, we are stuck with the large continuum of solutions based on the geometry, as described by the set $ \mathcal{I} = {\rm ker}(\A) \cap T_{\|\signal\|_1 \leq \tau}(\bestsignal) $, as illustrated in Figure 1.1(b) within the box.

Without further information about the discrete nature of $ \bestsignal $, a convex optimization algorithm solving the Lasso problem can arbitrarily select a vector from $ \mathcal{I} $. By forcing basic solutions in optimization, we can reduce the size of the solution space to $ \mathcal{L} = \mathcal{I} \cap \lbrace \|\signal \|_1 = 1\rbrace $, which is constituted by the sparse vectors (C) and (E). Note that $ \mathcal{L}$ might be still large in high dimensions. However, in this scenario, adding the $ \Delta_{\ell_0}(k) $ constraints, we can make precise selections (e.g., exactly 1-sparse), significantly reduce the candidate solution set, and, in many cases, can obtain the correct solution (E) if we leverage the norm constraint.

\textbf{Contents of this book chapter:} Within this context, we describe two efficient, sparse approximation algorithms, called {\it \textsc{GAME}} and {\it \textsc{Clash}}, that operate over sparsity and $\ell_1$-norm constraints. They address the following nonconvex problem:
\begin{equation}\label{spexact}
\begin{aligned}
	& \underset{\as\in\Delta_{\ell_0, \ell_1}(k,\tau)}{\text{minimize}}
	& & \|\A\as-\f\|_q	,
\end{aligned}
\end{equation}
where $\Delta_{\ell_1}(\tau)$ is the set of all $k$-sparse vectors in $ \Delta_{\ell_1}(\tau) $:
\begin{equation}\label{sparsesimplex}
 \Delta_{\ell_0, \ell_1}(k,\tau)\doteq\{\as\in\R^\n: \|\as\|_0\leq k\mbox{ and }\|\as\|_1 \leq \tau \}.
 \end{equation}

To introduce the {\it Game-theoretic Approximate Matching Estimator} ({\it \textsc{GAME}}) method, we reformulate (\ref{spexact}) as a zero-sum game. {\it \textsc{GAME}} then efficiently obtains a sparse approximation for the optimal game solution. {\it \textsc{GAME}} employs a primal-dual scheme, and require $\tilde{O}(k)$ iterations in order to find a $k$-sparse vector with $O\left(k^{-0.5}\right)$ additive approximation error.

To introduce the {\it Combinatorial selection and Least Absolute SHrinkage} operator {\it \textsc{Clash}}, we recall hard thresholding methods and explain how to incorporate the $\ell_1$ norm constraint. A key feature of the {\it \textsc{Clash}} approach is that it allows us to exploit ideas from the model-based compressive sensing (model-CS) approach, where selections can be driven by a structured sparsity model \cite{baraniuk2010model,clash}.

We emphasize again that since $\Delta_{\ell_0, \ell_1}(k,\tau)$ is not convex, the optimization problem \eqref{spexact} is not a convex optimization problem. However, we can still derive theoretical approximation guarantees of both algorithms. For instance, we can prove that for every dimension reducing matrix $\A$, and every measurement vector $\f$, {\it \textsc{{\it \textsc{GAME}}}} can find a vector $\has\in\Delta_{\ell_0, \ell_1}(k,\tau)$ with
\begin{equation}\label{sparseapp}
\|\A\has-f\|_q\leq\min_{\as\in\Delta_{\ell_0, \ell_1}(k,\tau)}\|\A\as-\f\|_q+\tilde{O}\left(\frac{1}{\sqrt{k}}\right),
\end{equation}
where $q$ is a positive integer. This sparse approximation framework surprisingly works for any matrix $\A$. Compared to the {\it \textsc{{\it \textsc{GAME}}}} algorithm, \class requires stronger assumptions on the measurement matrix for estimation guarantees. However, these assumptions, in the end, lead to improved empirical performance.

\section{Preliminaries}\label{sec: prelims}
Here, we cover basic mathematical background that is used in establishing algorithmic guarantees in the sequel.
\subsection{Bregman Projections}
\label{sec:bregmandef}
Bregman divergences or Bregman distances are an important family of distances that all share similar properties \cite{censor,bregman}.

\begin{definition}[Bregman Distance]
\label{bregmandist}
Let $\F:{\cal S}\rightarrow \R$ be a continuously-differentiable real-valued and strictly convex function defined on a closed convex set ${\cal S}$. The Bregman distance associated with $\F$ for points $\P$ and $\Q$ is:
$$\breg(\P,\Q)=\F(\P)-\F(\Q)-\inner{(\P-\Q)}{\nabla \F(\Q)}.$$
\end{definition}

\begin{figure}[t]
\centering
   \includegraphics[scale=0.25] {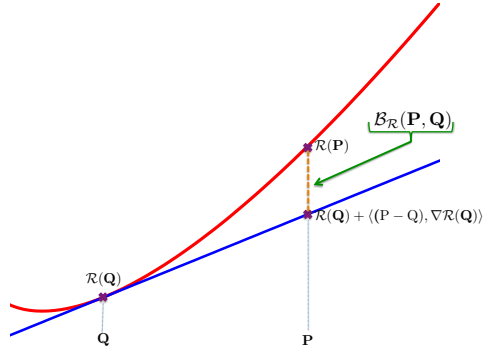}
\caption{The Bregman divergence associated with a continuously-differentiable real-valued and strictly convex function $\F$ is the vertical distance at $\P$ between the graph of $\F$ and the line tangent to the graph of $\F$ in $\Q$.}
  \label{fig_breg_1}
\end{figure}

Table~\ref{tab_breg_1} summarizes examples of the most widely used Bregman functions and the corresponding Bregman distances.

\begin{table}[ht]
\caption{Summary of the most popular Bregman functions and their corresponding Bregman distances. Here $\A$ is a positive semidefinite matrix.}
\begin{center}
\begin{tabular}{|c|c|c|}
\hline
  \multirow{2}{*}{Name} & Bregman & Bregman
 \\ & Function $(\F(\P))$ & Distance $(\breg(\P,\Q))$
 \\\hline
 \hline
  Squared &    \multirow{2}{*}{$\|\P\|_2^2$} & \multirow{2}{*}{$\|\P-\Q\|_2^2$}\\
  Euclidean & &\\
  \hline
  Squared &    \multirow{2}{*}{$\inner{\P}{\bs{\Phi}\P}$} & \multirow{2}{*}{$\inner{(\P-\Q)}{\bs{\Phi}(\P-\Q)}$}\\
   Mahalanobis & &\\
  \hline
 \multirow{2}{*}{Entropy} &\multirow{2}{*}{$\sum_i {\rm P}_i \log {\rm P}_i-{\rm P}_i$} & \multirow{2}{*}{$\sum_i {\rm P}_i \log \frac{{\rm P}_i}{{\rm Q}_i} -\sum_i ({\rm P}_i-{\rm Q}_i)$}\\ & &
 \\\hline
  \multirow{2}{*}{Itakura-Saito} &\multirow{2}{*}{$\sum_i -\log {\rm P}_i$} & \multirow{2}{*}{$\sum_i \left(\frac{{\rm P}_i}{{\rm Q}_i}- \log \frac{{\rm P}_i}{{\rm Q}_i}+1\right)$} \\ & &
 \\\hline
\end{tabular}
\end{center}
\label{tab_breg_1}
\end{table}

The Bregman distance has several important properties that we will use later in analyzing our sparse approximation algorithm.
\begin{theorem}
\label{bregprop}
Bregman distance satisfies the following properties:
\begin{itemize}
\item (P1). $\breg(\P,\Q)\geq 0$, and the equality holds if and only if $\P=\Q$.
\item (P2). For every fixed $\Q$ if we define ${\cal G}(\P)=\breg(\P,\Q)$, then
$$\nabla {\cal G}(\P)=\nabla \F(\P)-\nabla \F(\Q).$$
\item (P3). Three point property: For every $\P,\Q$ and $\bs{\rm T}$ in ${\cal S}$
\begin{align}\nonumber\breg(\P,\Q)&=\breg(\P,\bs{\rm T})+\breg (\bs{\rm T},\Q) +\inner{(\P-\bs{\rm T})}{\nabla \F(\Q)-\nabla \F(\bs{\rm T})}.\end{align}
\item (P4). For every $\P,\Q\in{\cal S}$, $$\breg(\P,\Q)+\breg(\Q,\P)=\inner{(\P-\Q)}{(\nabla \F(\P)-\nabla \F(\Q))}.$$
\end{itemize}
\end{theorem}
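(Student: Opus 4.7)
The plan is to verify each of the four properties by direct manipulation of the defining identity
\[
\breg(\P,\Q) = \F(\P) - \F(\Q) - \inner{\P-\Q}{\nabla \F(\Q)},
\]
using only strict convexity and differentiability of $\F$. Properties P2, P3, P4 are purely algebraic; only P1 needs a genuine convexity input. I would take them in the order P2, P4, P3, P1 because the calculations grow in length and P1 is the only ``analytic'' statement.

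For P2, I would fix $\Q$ and differentiate $\Gc(\P) = \F(\P) - \F(\Q) - \inner{\P-\Q}{\nabla \F(\Q)}$ in $\P$. The first term contributes $\nabla \F(\P)$; the second is constant in $\P$; the third is linear in $\P$ with gradient $\nabla \F(\Q)$, giving the stated formula. For P4, I would write out $\breg(\P,\Q)+\breg(\Q,\P)$ from the definition: the $\F(\P)$ and $\F(\Q)$ terms cancel, leaving $-\inner{\P-\Q}{\nabla \F(\Q)} - \inner{\Q-\P}{\nabla \F(\P)} = \inner{\P-\Q}{\nabla \F(\P) - \nabla \F(\Q)}$. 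For P3, I would expand each of the three Bregman terms on the right-hand side from the definition, collect the $\F(\cdot)$ contributions (the $\F(\bs{\rm T})$ terms cancel and the remaining ones reproduce $\F(\P)-\F(\Q)$), and then collect the inner-product remainders; the $\nabla \F(\bs{\rm T})$ terms cancel against the explicit correction $\inner{\P-\bs{\rm T}}{\nabla \F(\Q)-\nabla \F(\bs{\rm T})}$, and what is left is exactly $-\inner{\P-\Q}{\nabla \F(\Q)} = \breg(\P,\Q) - \F(\P) + \F(\Q)$, matching the left side.

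For P1, I would invoke the first-order characterization of convexity: since $\F$ is convex and differentiable on the convex set $\Sc$, for every $\P,\Q \in \Sc$,
\[
\F(\P) \geq \F(\Q) + \inner{\P-\Q}{\nabla \F(\Q)},
\]
which is exactly $\breg(\P,\Q)\geq 0$. For the equality case, strict convexity gives that the inequality above is strict whenever $\P \neq \Q$: otherwise, looking at the restriction $\varphi(t) = \F(\Q + t(\P-\Q))$, one would have $\varphi(1) = \varphi(0) + \varphi'(0)$, forcing $\varphi$ to be affine on $[0,1]$, which contradicts strict convexity of $\F$ along the segment $[\Q,\P]$.

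The main obstacle, such as it is, will be the equality case in P1: one must be careful to extract strict inequality from strict convexity rather than from mere convexity. All other steps reduce to bookkeeping of the defining formula, and the three-point identity in P3 is essentially a telescoping of the gradient inner products, so I would present it as a single aligned computation.
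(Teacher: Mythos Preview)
Your approach is correct and is exactly what the paper does: the paper's entire proof is the single sentence ``All four properties follow directly from Definition~\ref{bregmandist},'' and your proposal is simply that same direct verification with the algebra spelled out. One caution on P3: when you actually carry out the expansion, you will find that with the sign $\nabla\F(\Q)-\nabla\F(\bs{\rm T})$ as printed the $\nabla\F(\bs{\rm T})$ contributions do not cancel but double, so the identity only holds with the correction term written as $\inner{\P-\bs{\rm T}}{\nabla\F(\bs{\rm T})-\nabla\F(\Q)}$; this is a typo in the stated formula rather than a flaw in your method.
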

\begin{proof}
All four properties follow directly from Definition~\ref{bregmandist}.
\end{proof}

Now that we are equipped with the properties of Bregman distances, we are ready to define Bregman projections of points into convex sets.
\begin{definition}[Bregman Projection]
\label{bregmanproj}
Let $\F:{\cal S}\rightarrow \R$ be a continuously-differentiable real-valued and strictly convex function defined on a closed convex set ${\cal S}$. Let $\Omega$ be a closed subset of ${\cal S}$. Then, for every point $\Q$ in ${\cal S}$, the Bregman projection of $\Q$ into $\Omega$, denoted as $\proj(\Q)$ is
$$\proj(\Q)\doteq \arg\min_{\P\in\Omega} \breg(\P,\Q).$$
\end{definition}

Bregman projections satisfy a generalized Pythagorean Theorem.

\begin{theorem}[Generalized Pythagorean Theorem \cite{censor}]
\label{pythag}
Let $\F:{\cal S}\rightarrow \R$ be a continuously-differentiable real-valued and strictly convex function defined on a closed convex set ${\cal S}$. Let $\Omega$ be a closed subset of ${\cal S}$. Then for every $\P\in \Omega$ and $\Q\in{\cal S}$
\begin{equation}
\breg(\P,\Q)\geq \breg(\P, \proj(\Q)) + \breg(\proj(\Q),\Q),
\end{equation}
and in particular
\begin{equation}
\breg(\P,\Q)\geq \breg(\P, \proj(\Q)).
\end{equation}
\end{theorem}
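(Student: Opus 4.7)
The plan is to combine the three-point property (P3) with the first-order optimality condition for the Bregman projection, which is a constrained convex optimization problem. Let $\P^\ast = \proj(\Q)$ denote the Bregman projection of $\Q$ onto $\Omega$. By \defi{bregmanproj}, $\P^\ast$ minimizes the differentiable convex function $g(\U) = \breg(\U, \Q)$ over the closed convex set $\Omega$.

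First I would apply the standard first-order optimality condition for minimization of a differentiable function over a convex set: for every $\P \in \Omega$,
\begin{equation*}
\inner{\nabla g(\P^\ast)}{\P - \P^\ast} \geq 0.
\end{equation*}
Invoking property (P2) of \theo{bregprop}, the gradient of $g$ at $\P^\ast$ equals $\nabla \F(\P^\ast) - \nabla \F(\Q)$, so the optimality condition becomes
\begin{equation*}
\inner{\P - \P^\ast}{\nabla \F(\P^\ast) - \nabla \F(\Q)} \geq 0, \qquad \forall \P \in \Omega.
\end{equation*}

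Next I would instantiate the three-point property (P3) of \theo{bregprop} with the triple $(\P,\P^\ast,\Q)$, giving an identity of the form
\begin{equation*}
\breg(\P,\Q) = \breg(\P,\P^\ast) + \breg(\P^\ast,\Q) + \inner{\P - \P^\ast}{\nabla \F(\P^\ast) - \nabla \F(\Q)}.
\end{equation*}
Because $\P \in \Omega$, the inner-product term is nonnegative by the optimality condition established above, and the first displayed inequality of the theorem follows immediately. The second inequality $\breg(\P,\Q) \geq \breg(\P,\proj(\Q))$ is then obtained by discarding the nonnegative term $\breg(\P^\ast,\Q)$, using (P1).

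The only delicate point is matching signs between (P3) and the optimality condition; one must be careful to pair the three-point decomposition with the correct direction of the variational inequality so that the inner-product residual points the right way to be absorbed. Aside from this, the argument is essentially algebraic once (P2), (P3), and the optimality of $\P^\ast$ over the convex set $\Omega$ are on the table, and no further assumption on $\F$ beyond the strict convexity and differentiability already in \defi{bregmanproj} is needed.
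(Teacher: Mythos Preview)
The paper does not actually supply a proof of \theo{pythag}; it simply refers the reader to \cite{censor} and \cite{bregmanbook}. Your argument is precisely the standard one found in those references: combine the three-point identity with the first-order variational inequality characterizing the projection, so there is nothing to compare on the level of approach.

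Two small remarks are nonetheless worth making. First, your proof relies on the first-order optimality condition for a differentiable convex function over a \emph{convex} set, so the conclusion really needs $\Omega$ to be a closed \emph{convex} subset of ${\cal S}$; the theorem statement as printed omits the word ``convex,'' but without it the variational inequality $\inner{\nabla g(\P^\ast)}{\P-\P^\ast}\geq 0$ need not hold and the Pythagorean inequality can fail. Second, the three-point formula you wrote,
\[
\breg(\P,\Q)=\breg(\P,\P^\ast)+\breg(\P^\ast,\Q)+\inner{\P-\P^\ast}{\nabla\F(\P^\ast)-\nabla\F(\Q)},
\]
has the opposite sign on the inner-product term from (P3) as stated in \theo{bregprop}. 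A direct computation from \defi{bregmandist} shows that your version is the correct one, so your caution about ``matching signs'' was warranted; the printed (P3) appears to have a sign slip, and your argument stands as written.
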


 We refer the reader to \cite{censor}, or \cite{bregmanbook} for a proof of this theorem and further discussions.

\subsection{Euclidean Projections onto the $\ell_0$ and the $\ell_1$-ball}
Here, we describe two of key actors in sparse approximation.

\textbf{Projections onto combinatorial sets:}
The Euclidean projection of a signal $ \mathbf{w} \in \mathbb{R}^\n $ on the subspace defined by $ \Delta_{\ell_0}(\sparsity) $ is provided by:
\begin{align}{\label{eq:proj}}
\mathcal{P}_{\Delta_{\ell_0}(\sparsity)}(\mathbf{w}) = \argmin_{\as: \as \in \Delta_{\ell_0}(\sparsity)} \vectornorm{\as - \bf{w}}_2,
\end{align} whose solution is hard thresholding. That is, we sort the coefficients of $\mathbf{w}$ in decreasing magnitude and keep the top $\sparsity$ and threshold the rest away. This operation can be done in $ O(n \log n) $ time complexity via simple sorting routines. 

\textbf{Projections onto convex norms:}
Given $ \mathbf{w} \in \mathbb{R}^\n $, the Euclidean projection onto a convex $ \ell_1 $-norm ball of radius at most $ \tau $ defines the optimization problem:
\begin{align}
\mathcal{P}_{\Delta_{\ell_1}(\tau)}(\mathbf{w}) = \argmin_{\as: \as \in \Delta_{\ell_1}(\tau)} \vectornorm{\as - \bf{w}}_2,
\end{align}
whose solution is soft thresholding. That is, we decrease the magnitude of all the coefficients by a constant value just enough to meet the $\ell_1$ norm constraint. A solution can be obtained in $ O(n \log n) $ time complexity with simple sorting routines, similar to above.

\subsection{Restricted Isometry Property}
In order to establish stronger theoretical guarantees for the algorithms, it is necessary to use Restricted Isometry Property (RIP) assumption. For each positive integers $q$ and $k$, and each $\epsilon$ in $(0,1)$, an $\m\times \n$ matrix $\A$ satisfies the $(k, \epsilon)$ RIP in $\ell_q$ norm ($(k, \epsilon)$ RIP-$q$) \cite{berinde2008combining,sinathesis}, if for every $k$-sparse vector $\as$,
$$(1-\epsilon)\|\as\|_q\leq \|\A\as\|_q\leq (1+\epsilon)\|\as\|_q.$$
This assumption implies near isometric embedding of the sparse vectors by the matrix $\A$. We just briefly mention that such matrices can be constructed \emph{randomly} using certain classes of distributions \cite{sinathesis}.
 
\section{The GAME Algorithm}
\label{sec:sparsegame}
\subsection{A Game Theoretic Reformulation of Sparse Approximation}
We start by defining a zero-sum game and then proving that the sparse approximation problem of Equation~\eqref{spexact} can be reformulated as a zero-sum game.

\begin{definition}[Zero-sum games \cite{gamebook}]
Let ${\cal A}$ and ${\cal B}$ be two closed sets. Let ${\cal L}:{\cal A}\times {\cal B}\rightarrow \R$ be a function. The value of a zero sum game, with domains ${\cal A}$ and ${\cal B}$ with respect to a function ${\cal L}$ is defined as
\begin{equation}
\min_{\bs{a}\in {\cal A}}\max_{\bs{b}\in{\cal B}} {\cal L}(\bs{a},\bs{b}).
\end{equation}
\end{definition}

The function $\Lo$ is usually called the \textit{loss function}. A zero-sum game can be viewed as a game between two players Mindy and Max in the following way. First, Mindy finds a vector $\bs{a}$, and then Max finds a vector ${\bs{b}}$. The loss that Mindy suffers\footnote{which is equal to the gain that Max obtains as the game is zero-sum.} is ${\cal L}(\bs{a},\bs{b})$. The game-value of a zero-sum game is then the loss that Mindy suffers if both Mindy and Max play with their optimal strategies.

Von Neumann's well-known Minimax Theorem \cite{minimax1,minimax2} states that if both ${\cal A}$ and ${\cal B}$ are convex compact sets, and if the loss function ${\cal L}(\bs{a},\bs{b})$ is convex with respect to $\bs{a}$, and concave with respect to $\bs{b}$, then the game-value is independent of the ordering of the game players.

\begin{theorem}[Von Neumann's Minimax Theorem \cite{minimax1}]
\label{von}
Let ${\cal A}$ and ${\cal B}$ be closed convex sets, and let ${\cal L}:{\cal A}\times {\cal B}\rightarrow \R$ be a function which is convex with respect to its first argument, and concave with respect to its second argument. Then $$\inf_{\bs{a}\in{\cal A}} \sup_{\bs{b}\in{\cal B}} {\cal L}(\bs{a},\bs{b})= \sup_{\bs{b}\in{\cal B}}\inf_{\bs{a}\in{\cal A}} {\cal L}(\bs{a},\bs{b}).$$
\end{theorem}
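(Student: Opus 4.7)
The plan is to split the equality into the two inequalities and handle them separately. The direction
$$\sup_{\bs{b}\in\mathcal{B}} \inf_{\bs{a}\in\mathcal{A}} \mathcal{L}(\bs{a},\bs{b}) \le \inf_{\bs{a}\in\mathcal{A}} \sup_{\bs{b}\in\mathcal{B}} \mathcal{L}(\bs{a},\bs{b})$$
(``weak duality'') is essentially a tautology that needs neither convexity nor closedness. For any fixed $\bs{a}_0\in\mathcal{A}$ and $\bs{b}_0\in\mathcal{B}$, the chain $\inf_{\bs{a}}\mathcal{L}(\bs{a},\bs{b}_0)\le \mathcal{L}(\bs{a}_0,\bs{b}_0)\le \sup_{\bs{b}}\mathcal{L}(\bs{a}_0,\bs{b})$ holds by definition; taking $\sup_{\bs{b}_0}$ on the left-hand side and $\inf_{\bs{a}_0}$ on the right-hand side preserves the inequality. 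I would dispatch this in one short paragraph.

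The nontrivial direction is where the convex-concave structure is used, and my plan is to exhibit a \emph{saddle point}: a pair $(\bs{a}^\ast,\bs{b}^\ast)\in\mathcal{A}\times\mathcal{B}$ satisfying
$$\mathcal{L}(\bs{a}^\ast,\bs{b})\le \mathcal{L}(\bs{a}^\ast,\bs{b}^\ast)\le \mathcal{L}(\bs{a},\bs{b}^\ast) \quad \text{for all } \bs{a}\in\mathcal{A},\ \bs{b}\in\mathcal{B}.$$
Once such a pair is produced, both iterated extrema collapse to the common value $\mathcal{L}(\bs{a}^\ast,\bs{b}^\ast)$, which gives equality. I would obtain the saddle point by applying Kakutani's fixed-point theorem to the joint best-response correspondence
$$\Phi(\bs{a},\bs{b})\doteq \Bigl(\argmin_{\bs{a}'\in\mathcal{A}} \mathcal{L}(\bs{a}',\bs{b})\Bigr)\times\Bigl(\argmax_{\bs{b}'\in\mathcal{B}} \mathcal{L}(\bs{a},\bs{b}')\Bigr).$$
Convexity of $\mathcal{L}(\cdot,\bs{b})$ makes the first factor a convex set, concavity of $\mathcal{L}(\bs{a},\cdot)$ makes the second, and continuity of $\mathcal{L}$ together with compactness of the domains ensures nonemptiness and that $\Phi$ has a closed graph. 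A fixed point of $\Phi$ is, by construction, a saddle point.

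The main obstacle is reconciling the hypotheses as stated with what the argument actually needs. Kakutani requires the domains to be compact, whereas the theorem above lists only closedness and convexity; without compactness the conclusion is known to fail (take $\mathcal{L}(a,b)=a-b$ on $\mathbb{R}\times\mathbb{R}$). I would either strengthen the statement to match the informal preamble (which explicitly says ``convex compact''), or work with a truncation $\mathcal{A}_n=\mathcal{A}\cap B(\bs{0},n)$, prove the identity at each scale, and pass to the limit under appropriate coercivity of $\mathcal{L}$. A secondary subtlety is continuity of $\mathcal{L}$ on the whole domain: by convex-concave structure it is automatically continuous on the interior, and boundary values can be handled by a standard semi-continuity argument. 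As a self-contained alternative that avoids fixed-point machinery and stays within the convex-analysis toolkit already developed in Section~\ref{sec:bregmandef}, I would separate the epigraphical convex set $\{(\bs{u},r)\in\mathcal{A}\times\mathbb{R}: r\ge \sup_{\bs{b}}\mathcal{L}(\bs{u},\bs{b})\}$ from a halfspace lying strictly below the candidate minimax value, and read the equalizing strategy off the resulting supporting hyperplane.
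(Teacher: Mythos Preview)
The paper does not actually prove this theorem: immediately after the statement it simply writes ``The proof of the Minimax Theorem is provided in \cite{gameproof}'' and moves on. So there is no in-paper argument to compare your proposal against; any self-contained proof you supply already goes beyond what the authors do.

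Your plan is sound and follows one of the standard routes (weak duality plus a Kakutani fixed point for the best-response correspondence, with a separating-hyperplane argument sketched as a fallback). You are also right to flag the hypothesis mismatch: the theorem as printed assumes only that $\mathcal{A}$ and $\mathcal{B}$ are closed and convex, yet the preceding sentence in the paper says ``convex compact'', and indeed your counterexample $\mathcal{L}(a,b)=a-b$ on $\mathbb{R}\times\mathbb{R}$ shows the printed version is false without compactness (or a coercivity surrogate). For the purposes of this chapter the theorem is only ever invoked with $\mathcal{A}=\Delta_{\ell_1}(\tau)$ and $\mathcal{B}=\simp_p$, both of which are compact, so the cleanest fix is the one you propose first: add compactness to the hypotheses rather than carry the truncation-and-limit argument, which would require extra coercivity assumptions that the paper never states.
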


For the history of the Minimax Theorem see \cite{minimaxrob}. The Minimax Theorem tells us that for a large class of functions ${\cal L}$, the values of the min-max game in which Mindy goes first is identical to the value of the max-min game in which Max starts the game. The proof of the Minimax Theorem is provided in \cite{gameproof}.

Having defined a zero-sum game, and the Von Neumann Minimax Theorem, we next show how the sparse approximation problem of Equation~\eqref{spexact} can be reformulated as a zero-sum game. Let $p\doteq \frac{q}{q-1}$, and define \begin{equation}\label{simpp}\simp_p\doteq \{\P\in\R^\m: \|\P\|_p\leq 1\} .\end{equation} Define the loss function ${\cal L}: \simp_p\times \Delta_{\ell_1}(\tau) \rightarrow \R$ as \begin{equation}\label{lossdef}{\cal L}(\P,\as)\doteq \inner{\P}{(\A\as-f)}.\end{equation} Observe that the loss-function is bilinear. Now it follows from H\"{o}lder inequality that for every $\as$ in $\Delta_{\ell_0, \ell_1}(k,\tau)$, and for every $\P$ in $\simp_p$
\begin{equation}
\label{firstminmax}
{\cal L}(\P,\as) =\inner{\P}{ (\A\as-\f)} \leq \|\P\|_p \|\A\as-\f\|_q \leq \|\A\as-\f\|_q.
\end{equation}
The inequality of Equation~\eqref{firstminmax} becomes equality for $$P^*_i= \frac{(\A\as-\f)_i^{q/p}}{\left(\sum_{i=1}^\m (\A\as-\f )_i^{q} \right)^{1/p}}.$$ Therefore
\begin{equation}\label{secondminmax}
\max_{\P\in\simp_p}{\cal L}(\P,\as) =\max_{\P\in\simp_p}\inner{\P}{(\A\as-\f)}=\inner{\P^*}{(\A\as-\f)}= \|\A\as-\f\|_q.
\end{equation}
Equation~\eqref{secondminmax} is true for every $\as\in\Delta_{\ell_1}(\tau)$. As a result, by taking the minimum over $\Delta_{\ell_0, \ell_1}(k,\tau)$ we get
$$\min_{\as\in\Delta_{\ell_0, \ell_1}(k,\tau)}\|\A\as-\f\|_q=\min_{\as\in\Delta_{\ell_0, \ell_1}(k,\tau)} \max_{\P\in\simp_p}{\cal L}(\P,\as).$$
Similarly by taking the minimum over $\Delta_{\ell_1}(\tau)$ we get
\begin{equation}\label{newdemand}\min_{\as\in\Delta_{\ell_1}(\tau)}\|\A\as-\f\|_q=\min_{\as\in\Delta_{\ell_1}(\tau)} \max_{\P\in\simp_p}{\cal L}(\P,\as).\end{equation}
Solving the sparse approximation problem of Equation~\eqref{spexact} is therefore equivalent to finding the optimal strategies of the game \begin{equation}\label{temp1}\min_{\as\in\Delta_{\ell_0,\ell_1}(k,\tau)} \max_{\P\in\simp_p}{\cal L}(\P,\as).\end{equation} In the next section we provide a primal-dual algorithm that approximately solves this min-max game. Observe that since $\Delta_{\ell_0,\ell_1}(k,\tau)$ is a subset of $\Delta_{\ell_1}(\tau)$, we always have
$$\min_{\as\in\Delta_{\ell_1}(\tau)} \max_{\P\in\simp_p}{\cal L}(\P,\as)\leq \min_{\as\in\Delta_{\ell_0,\ell_1}(k,\tau)} \max_{\P\in\simp_p}{\cal L}(\P,\as),$$ and therefore, in order to approximately solve the game of Equation~\eqref{temp1}, it is sufficient to find $\has\in \Delta_{\ell_0,\ell_1}(k,\tau)$ with
\begin{equation}\label{last_game} \max_{\P\in\simp_p}{\cal L}(\P,\has)\approx \min_{\as\in\Delta_{\ell_1}(\tau)} \max_{\P\in\simp_p}{\cal L}(\P,\as).\end{equation}

\subsection{Algorithm Description}
\label{sec:game_game}
In this section we provide an efficient algorithm for approximately solving the problem of sparse approximation in $\ell_q$ norm, defined by Equation~\eqref{sparseapp}. Let ${\cal L}(\P,\as)$ be the loss function defined by Equation~\eqref{lossdef}, and recall that in order to approximately solve Equation~\eqref{sparseapp}, it is sufficient to find a sparse vector $\has\in \Delta_{\ell_0,\ell_1}(k,\tau)$ such that
\begin{equation}\label{newgame}\max_{\P \in\simp_p} {\cal L}(\P,\has)\approx  \min_{\as'\in \Delta_{\ell_1}(\tau)}\max_{\P \in\simp_p} {\cal L}(\P,\as).\end{equation}

The original sparse approximation problem of Equation~\eqref{sparseapp} is NP-complete, but it is computationally feasible to compute the value of the min-max game
\begin{equation}\label{newgame2} \min_{\as'\in \Delta_{\ell_1}(\tau)}\max_{\P \in\simp_p} {\cal L}(\P,\as).\end{equation}
The reason is that the loss function ${\cal L}(\P,\as)$ of Equation~\eqref{lossdef} is a bilinear function, and the sets $\Delta_{\ell_1}(\tau)$, and $\simp_p$ are both convex and closed.

Therefore, finding the game values and optimal strategies of the game of Equation~\eqref{newgame2} is equivalent to solving a convex optimization problem and can be done using off-the-shelf non-smooth convex optimization methods \cite{nonsmooth,nonsmooth2}. However, if an off-the-shelf convex optimization method is used, then there is no guarantee that the recovered strategy $\has$ is also sparse. We need an approximation algorithm that finds near-optimal strategies $\has$ and $\hat{\P}$ for Mindy and Max with the additional guarantee that Mindy's near optimal strategy $\has$ is sparse.

Here we introduce the {\em Game-theoretic Approximate Matching Estimator} (GAME) algorithm which finds a sparse \textit{approximation} to the min-max optimal solution of the game defined in Equation~\eqref{newgame2}. The \muse algorithm relies on the general primal-dual approach which was originally applied to developing strategies for repeated games  \cite{gameproof} (see also \cite{hazan} and \cite{hazan2}). The  pseudocode of the \muse Algorithm is provided in Algorithm~\ref{alggame}.

\begin{algorithm}[t]
\caption{GAME Algorithm for Sparse Approximation in $\ell_q$-norm.}
\textbf{Inputs:} $\m$-dimensional vector $\f$, $\m\times \n$ matrix $\A$, number of iterations $T$, sparse approximation norm $q$, Bregman function $\F$ and regularization parameter $\eta$.\\
\textbf{Output:} $\n$-dimensional vector $\has$
\label{alggame}
\end{algorithm}

The GAME Algorithm can be viewed as a repeated game between two players Mindy and Max who iteratively update their current strategies $\P^t$ and $\as^t$, with the aim of ultimately finding near-optimal strategies based on a $T$-round interaction with each other. Here, we briefly explain how each player updates his/her current strategy based on the new update from the other player.

Recall that the ultimate goal is to find the solution of the game $$\min_{\as'\in \Delta_{\ell_1}(\tau)}\max_{\P \in\simp_p} {\cal L}(\P,\as).$$ At the begining of each iteration $t$, Mindy receives the updated value $\P^{t}$ from Max. A greedy Mindy only focuses on Max's current strategy, and updates her current strategy to $\as^t= \arg\min_{\as\in\Delta_{\ell_1}(\tau)}{\cal L}(\P^{t},\as).$ In the following lemma we show that this is indeed what our Mindy does in the first three steps of the main loop.

\begin{lemma}\label{imp_game}
Let $\P^{t}$ denote Max's strategy at the begining of iteration $t$. Let $\bs{r}^t=\A^\top \P^{t}$, and let $i$ denote the index of a largest (in magnitude) element of $\bs{r}^t$. Let $\as^t$ be a $1$-sparse vector with ${\rm Supp}(\as^t)=\{i\}$ and with $\alpha_i^t=-\tau\,{\rm Sign}\left(r_{i}^t\right)$. Then $\as^t= \arg\min_{\as\in\Delta_{\ell_1}(\tau)}{\cal L}(\P^{t},\as).$
\end{lemma}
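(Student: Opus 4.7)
The plan is to reduce the minimization of $\mathcal{L}(\P^t,\as)$ over the $\ell_1$-ball to a standard duality argument via Hölder's inequality. First, I would expand the loss:
\[
\mathcal{L}(\P^t,\as) = \langle \P^t,\,\A\as-\f\rangle = \langle \A^\top \P^t,\,\as\rangle - \langle \P^t,\,\f\rangle = \langle \bs{r}^t,\,\as\rangle - \langle \P^t,\,\f\rangle.
\]
Since the second term is constant in $\as$, the problem $\min_{\as\in\Delta_{\ell_1}(\tau)}\mathcal{L}(\P^t,\as)$ is equivalent to
\[
\min_{\as\in\Delta_{\ell_1}(\tau)} \langle \bs{r}^t,\,\as\rangle.
\]

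Next I would apply Hölder's inequality with dual exponents $(\infty,1)$: for any $\as$ with $\|\as\|_1 \le \tau$,
\[
\langle \bs{r}^t,\,\as\rangle \;\geq\; -|\langle \bs{r}^t,\,\as\rangle| \;\geq\; -\|\bs{r}^t\|_\infty \|\as\|_1 \;\geq\; -\tau\,\|\bs{r}^t\|_\infty.
\]
This yields a lower bound on the objective. It then remains to verify that the candidate vector $\as^t$ defined in the lemma attains this bound. Plugging in $\as^t$, which has a single nonzero entry in coordinate $i$ equal to $-\tau\,{\rm Sign}(r_i^t)$, gives
\[
\langle \bs{r}^t,\,\as^t\rangle = r_i^t \cdot \bigl(-\tau\,{\rm Sign}(r_i^t)\bigr) = -\tau\,|r_i^t| = -\tau\,\|\bs{r}^t\|_\infty,
\]
where the last equality uses the choice of $i$ as the index of a largest-in-magnitude entry of $\bs{r}^t$. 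Also, $\|\as^t\|_1 = \tau$, so $\as^t\in\Delta_{\ell_1}(\tau)$, and hence $\as^t$ is a minimizer.

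There is essentially no obstacle here: the argument is a textbook Hölder/duality calculation, identical in spirit to the computation establishing equation~\eqref{secondminmax} in the chapter. The only mild subtlety is that when $\bs{r}^t$ has ties in $\|\cdot\|_\infty$, the minimizer need not be unique — any choice of a maximal-magnitude index $i$ yields a valid $\as^t$, so the lemma should be read as asserting that every such $\as^t$ is \emph{a} minimizer.
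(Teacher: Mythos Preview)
Your proposal is correct and follows essentially the same approach as the paper: reduce the minimization of $\mathcal{L}(\P^t,\as)$ over $\Delta_{\ell_1}(\tau)$ to minimizing $\langle \A^\top \P^t,\as\rangle$, apply H\"older's inequality to obtain the lower bound $-\tau\|\A^\top\P^t\|_\infty$, and then verify that the proposed $1$-sparse $\as^t$ attains it. Your explicit separation of the constant term $-\langle \P^t,\f\rangle$ and your remark on non-uniqueness when ties occur are minor clarifications, but the argument is the same.
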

\begin{proof}
Let $\tilde{\as}$ be any solution $\tilde{\as}= \arg\min_{\as\in\Delta_{\ell_1}(\tau)}{\cal L}(\P^{t},\as)$. It follows from the bilinearity of the loss function (Equation~\eqref{lossdef}) that
\begin{align}\nonumber\tilde{\as}&= \arg\min_{\as\in\Delta_{\ell_1}(\tau)}{\cal L}(\P^{t},\as)\\\nonumber&=\arg\min_{\as\in\Delta_{\ell_1}(\tau)} \inner{\P^{t}}{\A\as-\f}= \arg\min_{\as\in\Delta_{\ell_1}(\tau)} \inner{\A^\top \P^{t}}{\as}.\end{align}
Hence, H\"{o}lder inequality yields that for every $\as^\#\in\Delta_{\ell_1}(\tau)$,
\begin{equation}\label{holdeq}\inner{\A^\top \P^{t}}{\as^\#}\geq -\|\as^\#\|_1 \|\A^\top \P^{t} \|_\infty \geq -\tau  \|\A^\top \P^{t}\|_\infty.\end{equation}
Now let $\as^t$ be a $1$-sparse vector with ${\rm Supp}(\as^t)=\{i\}$ and $\as_i^t=-\tau\,{\rm Sign}\left(r_{i}^t\right)$. Then $\as^t\in \Delta_{\ell_1}(\tau)$, and
$$\inner{\A^\top \P^{t}}{\as^t}=-\tau \|\A^\top\P^{t}\|_\infty.$$
In other words, for $\as^t$ the Holder inequality is an equality. Hence $\as^t$ is a minimizer of $\inner{\A^\top \P^{t}}{\as}$.
\end{proof}

Thus far we have seen that at each iteration Mindy always finds a $1$-sparse solution $\as^t=\arg\min_{\as\in\Delta_{\ell_1}(\tau)}{\cal L}(\P^{t},\as)$. Mindy then sends her updated strategy $\as^t$ to Max, and now it is Max's turn to update his strategy. A greedy Max would prefer to update his strategy as $\P^{t+1}=\arg\max_{\P\in\simp_p} {\cal L}(\P,\as^t).$ However, our Max is more conservative and prefers to stay close to his previous value $\P^{t}$. In other words, Max has two competing objectives
\begin{enumerate}
\item Maximizing ${\cal L}(\P,\as^t)$, or equivalently minimizing $-{\cal L}(\P,\as^t)$.
\item Remaining close to the previous strategy $\P^t$, by minimizing $\breg(\P,\P^{t-1})$.
\end{enumerate}
Let $${\cal L}_{\F}(\P)\doteq -\eta{\cal L}(\P,\as^t) + \breg(\P,\P^{t}),$$ be a regularized loss function which is a linear combination of the two objectives above.

A conservative Max then tries to minimize a combination of the two objectives above by minimizing the regularized loss function \begin{equation}\label{hardy}\P^{t+1}=\arg\min_{\P\in\simp_p} {\cal L}_{\F}(\P)=\arg\min_{\P\in\simp_p}-\eta{\cal L}(\P,\as^t) + \breg(\P,\P^{t}).\end{equation}

Unfortunately, it is not so easy to efficiently solve the optimization problem of Equation~\eqref{hardy} at every iteration. To overcome this difficulty, our Max first ignores the constraint $\P^{t+1}\in\simp_p$, and instead finds a global optimizer of ${\cal L}_{\F}(\P)$ by setting $\nabla {\cal L}_{\F}(\P)={\bs 0}_\m$, and then projects back the result to $\simp_p$ via a Bregman projection.

More precisely, it follows from the Property~(P2) of Bregman distance (Theorem~\ref{bregprop}) that for every $\P$
$$\nabla {\cal L}_{\F}(\P)= -\eta(\A\as^t-\f) + \nabla \F(\P)-\nabla \F(\P^{t}),$$
and therefore if $\Q^t$ is a point with $$\nabla \F(\Q^t)= \nabla \F(\P^{t-1})+\eta(\A\as^t-\f),$$ then $\nabla {\cal L}_{\F}(\Q^t)=\bs{0}_\m$.

The vector $\Q^t$ is finally projected back to $\simp_p$ via a Bregman projection to ensure that Max's new strategy is in the feasible set $\simp_p$.

\subsection{The GAME Guarantees}
\label{sec:analgame}
In this section we prove that the \muse algorithm finds a near-optimal solution for the sparse approximation problem of Equation~\eqref{sparseapp}. The analysis of the \muse algorithm relies heavily on the analysis of the generic primal-dual approach. This approach originates from the \textit{link-function methodology} in computational optimization \cite{hazan2,warmuth}, and is related to the \textit{mirror descent} approach in the optimization community \cite{mirror1,mirror2} . The primal-dual Bregman optimization approach is widely used in online optimization applications including portfolio selection \cite{portfolio1,portfolio2}, online learning \cite{online}, and boosting \cite{bregman2,bregman3}.

However, there is a major difference between the sparse approximation problem and the problem of online convex optimization. In the sparse approximation problem, the set ${\cal A}=\Delta_{\ell_0, \ell_1}(k,\tau)$ is not convex anymore; therefore, there is no guarantee that an online convex optimization algorithm outputs a sparse strategy $\has$. Hence, it is not possible to directly translate the bounds from the online convex optimization scheme to the sparse approximation scheme.

Moreover, as discussed in Lemma~\ref{imp_game}, there is also a major difference between the Mindy players of the \muse algorithm and the general Mindy of general online convex optimization games. In the \muse algorithm, Mindy is not a \textit{blackbox adversary} that responds with an update to her strategy based on Max's update. Here, Mindy always performs a greedy update and finds the \textit{best} strategy as a response to Max's update. Moreover, our Mindy always finds a \textit{$1$-sparse} {new strategy}.
That is, she looks among all best responses to Max's update, and finds a $1$-sparse strategy among them.

As we will see next, the combination of cooperativeness by Mindy, and standard ideas for bounding the regret in online convex optimization schemes, enables us to analyze the \muse algorithm for sparse approximation. The following lemma bounds the regret loss of the primal-dual strategy in online convex optimization problems and is proved in \cite{hazan}.

\begin{theorem}
\label{hazanregret}
Let $q$ and $T$ be positive integers, and let $p=\frac{q}{q-1}$. Suppose that $\F$ is such that for every $\P,\Q\in\simp_p$, $\breg(\P,\Q)\geq \|\P-\Q\|_p^2$, and let
\begin{equation}
\label{Gbound}
{\rm G}=\max_{\as\in\Delta_{\ell_0, \ell_1}(1,\tau)}\|\A\as-\f\|_q.
\end{equation}
Also assume that for every $\P\in\simp_p$, we have $\breg(\P,\P^1) \leq {\rm D}^2$. Suppose $$\langle(\P^1,\as^1),\cdots,(\P^T,\as^T)\rangle$$ is the sequence of pairs generated by the \muse Algorithm after $T$ iterations with $\eta=\frac{2{\rm D}}{{\rm G}\sqrt{T}}$. Then
   \begin{align}
\nonumber \max_{\P\in\simp_p}  \frac{1}{T} \sum_{t=1}^T {\cal L}(\P,\as^t)  \leq  \frac{1}{T}\sum_{t=1}^T {\cal L}(\P^t,\as^t)+\frac{{\rm DG}}{2\sqrt{T}}.
   \end{align}
\end{theorem}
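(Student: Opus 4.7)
My plan is to recognize the Max update inside the GAME loop as mirror ascent on the sequence of linear-in-$\P$ losses ${\cal L}(\P, \as^t) = \ip{\P}{\A\as^t - \f}$, with mirror map $\F$, step size $\eta$, and Bregman projection $\proj$ onto $\Omega = \simp_p$, and then to run the standard online mirror-descent regret analysis. Fix an arbitrary competitor $\P^* \in \simp_p$; since ${\cal L}$ is linear in its first argument, a uniform upper bound on $\sum_{t=1}^{T} \ip{\P^* - \P^t}{\A\as^t - \f}$ over $\P^* \in \simp_p$ is exactly what is needed to prove the stated average-regret inequality (after dividing by $T$).

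For the per-round identity, I would apply the three-point property~(P3) of Theorem~\ref{bregprop} to the triple $(\P^*, \P^t, \Q^t)$ and substitute the Max update $\nabla \F(\Q^t) - \nabla \F(\P^t) = \eta(\A\as^t - \f)$ to obtain
$$\eta \ip{\P^* - \P^t}{\A\as^t - \f} \;=\; \breg(\P^*, \P^t) - \breg(\P^*, \Q^t) + \breg(\P^t, \Q^t).$$
Two moves then turn this identity into a telescoping inequality. First, because $\P^* \in \simp_p$ and $\P^{t+1} = \proj(\Q^t)$, the generalized Pythagorean inequality (Theorem~\ref{pythag}) gives $\breg(\P^*, \Q^t) \geq \breg(\P^*, \P^{t+1})$, so the subtracted Bregman term can be replaced by its smaller $\P^{t+1}$-version on the upper side. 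Second, I bound the slack $\breg(\P^t, \Q^t)$: property~(P4) combined with the update yields $\breg(\P^t, \Q^t) + \breg(\Q^t, \P^t) = \eta \ip{\Q^t - \P^t}{\A\as^t - \f}$, and H\"{o}lder's inequality together with the hypothesis $\breg(\Q^t, \P^t) \geq \|\Q^t - \P^t\|_p^2$ reduces this to the scalar estimate $\eta G x - x^2 \leq \eta^2 G^2 / 4$. It is essential here that $\|\A\as^t - \f\|_q \leq G$, which holds because Lemma~\ref{imp_game} shows $\as^t$ is $1$-sparse with $\|\as^t\|_1 \leq \tau$, i.e.\ $\as^t \in \Delta_{\ell_0, \ell_1}(1, \tau)$, so $G$ controls the residual by definition~\eqref{Gbound}.

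Combining the two pieces yields the per-round bound
$$\eta \ip{\P^* - \P^t}{\A\as^t - \f} \;\leq\; \breg(\P^*, \P^t) - \breg(\P^*, \P^{t+1}) + \frac{\eta^2 G^2}{4}.$$
Summing $t = 1, \ldots, T$ telescopes the Bregman terms down to $\breg(\P^*, \P^1) - \breg(\P^*, \P^{T+1}) \leq D^2$, producing $\eta \sum_t \ip{\P^* - \P^t}{\A\as^t - \f} \leq D^2 + T\eta^2 G^2/4$. Dividing by $\eta T$ and plugging in $\eta = 2D/(G\sqrt{T})$ balances the two terms and gives an average-regret bound of order $DG/\sqrt{T}$; taking the supremum over $\P^* \in \simp_p$ recovers the maximum appearing in the statement. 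The principal obstacle is sign-and-direction bookkeeping: Max plays ascent rather than descent, so care is required to ensure each invocation of the three-point identity and of the generalized Pythagorean is used to \emph{tighten} rather than loosen the per-round upper bound. Once the conventions are pinned down, the remaining steps are routine Bregman/H\"{o}lder manipulations.
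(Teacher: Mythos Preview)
The paper does not actually prove Theorem~\ref{hazanregret}: its ``proof'' consists of a single sentence deferring to the reference \cite{hazan}, noting only that the argument ``is based on the geometric properties of the Bregman functions.'' Your proposal is precisely the standard online mirror-descent regret analysis that such a reference would contain, so in that sense it is the same approach, just carried out in full rather than cited.

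Two small points are worth flagging. First, the strong-convexity hypothesis in the statement is only asserted for $\P,\Q\in\simp_p$, but your bound on $\breg(\P^t,\Q^t)$ invokes $\breg(\Q^t,\P^t)\geq\|\Q^t-\P^t\|_p^2$ with $\Q^t$ the \emph{unprojected} iterate, which need not lie in $\simp_p$. In practice the Bregman functions one uses are strongly convex on all of $\cS$, so this is almost certainly an imprecision in the theorem statement rather than a flaw in your argument, but you should state the stronger hypothesis explicitly. Second, carrying your arithmetic through with $\eta=2{\rm D}/({\rm G}\sqrt{T})$ gives
\[
\frac{{\rm D}^2}{\eta T}+\frac{\eta {\rm G}^2}{4}=\frac{{\rm DG}}{2\sqrt{T}}+\frac{{\rm DG}}{2\sqrt{T}}=\frac{{\rm DG}}{\sqrt{T}},
\]
a factor of~$2$ larger than the constant ${\rm DG}/(2\sqrt{T})$ printed in the statement. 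You were careful to say only ``of order ${\rm DG}/\sqrt{T}$,'' which is honest; the discrepancy is most likely a typo in the stated constant (or a different normalization in the cited source) rather than a missing idea on your part.
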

\begin{proof}
The proof of Theorem~\ref{hazanregret} is based on the geometric properties of the Bregman functions, and is provided in \cite{hazan}.
\end{proof}

Next we use Theorem~\ref{hazanregret} to show that the \muse algorithm after $T$ iterations finds a $T$-sparse vector $\has$ with near-optimal value $\|\A\has-\f\|_q.$

            \begin{theorem}\label{FSt_dant}
 Let $q$ and $T$ be positive integers, and let $p=\frac{q}{q-1}$. Suppose that for every $\P,\Q\in\simp_p$, the function $\F$ satisfies $\breg(\P,\Q)\geq \|\P-\Q\|_p^2$, and let
\begin{equation}
{\rm G}=\max_{\as\in\Delta_{\ell_0, \ell_1}(1,\tau)}\|\A\as-\f\|_q.
\end{equation}
Also assume that for every $\P\in\simp_p$, we have $\breg(\P,\P^1) \leq {\rm D}^2$. Suppose $$\langle(\P^1,\as^1),\cdots,(\P^T,\as^T)\rangle$$ is the sequence of pairs generated by the \muse Algorithm after $T$ iterations with $\eta=\frac{2{\rm D}}{{\rm G}\sqrt{T}}$. Let $\has=\frac{1}{T}\sum_{t=1}^T \as^t$ be the output of the \muse algorithm. Then $\has$ is a $T$-sparse vector with $\|\has\|_1\leq \tau$ and
    \begin{equation}\label{final}\|\A\has-\f\|_q\leq \min_{\as\in\Delta_{\ell_0, \ell_1}(T,\tau)}\|\A\as-\f\|_q +  \frac{{\rm DG}}{2\sqrt{T}}.\end{equation}
    \end{theorem}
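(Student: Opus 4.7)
The plan is to chain together three observations: (i) each of Mindy's plays $\as^t$ produced by Algorithm~\ref{alggame} is $1$-sparse by Lemma~\ref{imp_game}, so the running average is automatically $T$-sparse and stays in $\Delta_{\ell_1}(\tau)$; (ii) the $\ell_q$-error equals a maximum of the bilinear loss $\mathcal{L}$ by equation~\eqref{secondminmax}, and this maximum commutes with averaging in $\as$; and (iii) the regret bound in Theorem~\ref{hazanregret} together with Mindy's greedy-optimality against any fixed competitor in $\Delta_{\ell_1}(\tau)$ lets us compare to $\min_{\as\in\Delta_{\ell_0,\ell_1}(T,\tau)}\|\A\as-\f\|_q$.

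First I would handle the sparsity and norm claims. By Lemma~\ref{imp_game}, for every $t$ the strategy $\as^t$ has a single nonzero of magnitude $\tau$, so $\has=\tfrac1T\sum_{t=1}^T \as^t$ has at most $T$ nonzero entries, and since each $\as^t\in\Delta_{\ell_1}(\tau)$ and the $\ell_1$ ball is convex, $\|\has\|_1\leq\tau$. That gives $\has\in\Delta_{\ell_0,\ell_1}(T,\tau)$.

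Next I would convert the error to the game form. By~\eqref{secondminmax},
\begin{equation*}
\|\A\has-\f\|_q
=\max_{\P\in\simp_p}\mathcal{L}(\P,\has)
=\max_{\P\in\simp_p}\frac{1}{T}\sum_{t=1}^T \mathcal{L}(\P,\as^t),
\end{equation*}
where the second equality uses bilinearity of $\mathcal{L}$ in $\as$. Now apply Theorem~\ref{hazanregret} to the pairs $(\P^t,\as^t)$ generated by the algorithm with $\eta=\frac{2\mathrm{D}}{\mathrm{G}\sqrt{T}}$, giving
\begin{equation*}
\max_{\P\in\simp_p}\frac{1}{T}\sum_{t=1}^T \mathcal{L}(\P,\as^t)
\;\leq\; \frac{1}{T}\sum_{t=1}^T \mathcal{L}(\P^t,\as^t)+\frac{\mathrm{DG}}{2\sqrt{T}}.
\end{equation*}

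Finally I would use Mindy's greedy rule to introduce an arbitrary competitor. By Lemma~\ref{imp_game}, $\as^t$ minimizes $\mathcal{L}(\P^t,\cdot)$ over $\Delta_{\ell_1}(\tau)$, so for every $\as^\star\in\Delta_{\ell_0,\ell_1}(T,\tau)\subseteq\Delta_{\ell_1}(\tau)$ and every $t$, $\mathcal{L}(\P^t,\as^t)\leq\mathcal{L}(\P^t,\as^\star)$. Averaging and then taking the max over $\P\in\simp_p$ (and invoking~\eqref{secondminmax} once more) yields
\begin{equation*}
\frac{1}{T}\sum_{t=1}^T\mathcal{L}(\P^t,\as^t)
\;\leq\;\frac{1}{T}\sum_{t=1}^T\mathcal{L}(\P^t,\as^\star)
\;\leq\;\max_{\P\in\simp_p}\mathcal{L}(\P,\as^\star)
\;=\;\|\A\as^\star-\f\|_q.
\end{equation*}
Combining the two displayed inequalities and minimizing over $\as^\star\in\Delta_{\ell_0,\ell_1}(T,\tau)$ gives~\eqref{final}. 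The only nontrivial ingredient is the regret bound (Theorem~\ref{hazanregret}), which we assume; the main conceptual step on our side is recognizing that Mindy's greedy response is legal against \emph{any} $\as^\star$ in the larger set $\Delta_{\ell_1}(\tau)$, which is precisely what lets the competitor come from the sparse set without enlarging the regret penalty.
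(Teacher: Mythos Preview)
Your proof is correct, and it takes a somewhat different route from the paper's. The paper starts from the game value $\min_{\as\in\Delta_{\ell_1}(\tau)}\max_{\P\in\simp_p}\Lo(\P,\as)$, invokes Von~Neumann's Minimax Theorem (Theorem~\ref{von}) to swap the min and the max, plugs in the average $\hat{\P}=\tfrac{1}{T}\sum_t\P^t$, pushes the min inside the sum, and only then applies Theorem~\ref{hazanregret}. You instead fix an arbitrary competitor $\as^\star$, use Lemma~\ref{imp_game} pointwise to get $\Lo(\P^t,\as^t)\leq\Lo(\P^t,\as^\star)$, and bound each term by $\max_{\P}\Lo(\P,\as^\star)$ directly. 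This bypasses both the Minimax Theorem and the auxiliary average $\hat{\P}$, so your argument is strictly more elementary while yielding the identical bound (indeed, both arguments actually prove the slightly stronger inequality with $\min_{\as\in\Delta_{\ell_1}(\tau)}$ on the right before restricting to the sparse set). The paper's detour through duality is the more ``game-theoretic'' presentation and makes the primal--dual structure explicit, which may be pedagogically useful; your route is the standard online-to-batch reduction and is the shorter path to~\eqref{final}.
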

    \begin{proof}
    It follows from Step~2. of Algorithm~\ref{alggame} that every $\as^t$ is $1$-sparse and $\|\as^t\|_1=\tau.$ Therefore, $\has=\frac{1}{T}\sum_{t=1}^T\as^t$ can have at most $T$ non-zero entries and moreover $\|\has\|_1\leq \frac{1}{T} \sum_{t=1}^T \|\as^t\|_1 \leq \tau$. Therefore $\has$ is in $\Delta_{\ell_0, \ell_1}(T,\tau)$.

Next we show that the Equation~\ref{final} holds for $\has$. Let $\hat{\P}=\frac{1}{T}\sum_{t=1}^T\P^t.$ Observe that
       \begin{align}
   \min_{\as\in\Delta_{\ell_1}(\tau)}\max_{\P\in\simp_p} \Lo\left(\P,\as\right) &\stackrel{(e)}{=} \max_{\P\in\simp_p} \min_{\as\in\Delta_{\ell_1}(\tau)} \Lo\left(\P,\as\right) \nonumber \\ &\stackrel{(f)}{\geq} \min_{\as\in\Delta_{\ell_1}(\tau)} \Lo\left(\hat{\P},\as\right) \nonumber
\\\nonumber &\stackrel{(g)}{\geq} \frac{1}{T}\min_{\as\in \Delta_{\ell_1}(\tau)} \sum_{t=1}^T \Lo(\P^t,\as)
\\\nonumber &\stackrel{(h)}{\geq} \frac{1}{T}\sum_{t=1}^T \min_{\as\in\Delta_{\ell_1}(\tau)} \Lo(\P^t,\as) =^i  \frac{1}{T}\sum_{t=1}^T \Lo(\P^t,\as^t)
  \\\nonumber &\stackrel{(j)}{\geq}  \max_{\P\in\simp_p} \Lo\left(\P,\frac{1}{T}\sum_{t=1}^T \as^t\right)- \frac{{\rm DG}}{2\sqrt{T}}.
    \end{align}
      Equality (e) is the minimax Theorem (Theorem~\ref{von}). Inequality (f) follows from the definition of the $\max$ function. Inequalities (g) and (h) are consequences of the bilinearity of $\Lo$ and concavity of the $\min$ function. Equality (i) is valid by the definition of $\as^t$, and Inequality (j) follows from Theorem~\ref{hazanregret}. As a result
        \begin{align}
     \label{main}
     \|\A\has-\f\|_q&= \max_{\P\in\simp_p} \Lo\left(\P,\has\right) \leq  \min_{\as\in\Delta_{\ell_1}(\tau)} \max_{\P\in\simp_p} \Lo(\P,\as) \nonumber \\\nonumber& +  \frac{{\rm DG}}{2\sqrt{T}}= \min_{\as\in\Delta_{\ell_1}(\tau)} \|\A\as-\f\|_q +  \frac{{\rm DG}}{2\sqrt{T}} .
     \end{align}
\end{proof}

    \begin{remark}
    In general, different choices for the Bregman function may lead to different convergence bounds with different running times to perform the new projections and updates. For instance, a multiplicative update version of the algorithm can be derived by using the Bregman divergence based on the Kullback-Leibler function, and an additive update version of the algorithm can be derived by using the Bregman divergence based on the squared Euclidean function.
     \end{remark}
     
     Theorem~\ref{FSt_dant} is applicable to \textit{any} sensing matrix. Nevertheless, it does not guarantee that the estimate vector $\has$ is close enough to the target vector $\sas$. However, if the sensing matrix satisfies the RIP-$q$ property, then it is possible to bound the data-domain error $\|\has-\sas\|_q$ as well.
     
       \begin{theorem}\label{FSt_data}
 Let $q$ $k$, and $T$ be positive integers, let $\epsilon$ be a number in $(0,1)$, and let $p=\frac{q}{q-1}$. Suppose that for every $\P,\Q\in\simp_p$, the function $\F$ satisfies $\breg(\P,\Q)\geq \|\P-\Q\|_p^2$, and let $\A$ be an $\m\times\n$ sensing matrix satisfying the $(k+T,\epsilon)$ RIP-$q$ property. Let $\sas$ be a $k$-sparse vector with $\|\sas\|_1\leq \tau$, let $\e$ be an arbitrary noise vector in $\mathbb{R}^\m$, and set $\f=\A\sas+\e$. Let ${\rm G}$, ${\rm D}$, and $\eta$ be as of Theorem~\ref{FSt_dant}, and let let $\has$ be the output of the \muse algorithm after $T$ iterations. Then $\has$ is a $T$-sparse vector with $\|\has\|_1\leq \tau$ and
    \begin{equation}\label{final_data}
    \|\has-\sas\|_q\leq  \frac{2\|\e\|_q++ \frac{{\rm DG}}{2\sqrt{T}}}{(1-\epsilon)}.\end{equation}
    \end{theorem}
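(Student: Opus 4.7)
The plan is to combine the $\ell_q$-measurement error bound of Theorem~\ref{FSt_dant} with the RIP-$q$ property to pull the error back into the signal domain. The structural observation that drives the proof is that $\has$ is $T$-sparse (guaranteed by Theorem~\ref{FSt_dant}) and $\sas$ is $k$-sparse by assumption, so the difference $\has - \sas$ is $(k+T)$-sparse, exactly the sparsity level for which the RIP-$q$ constant $\epsilon$ is given. This single observation lets us invoke the lower inequality of RIP-$q$ to assert
\[
(1-\epsilon)\,\|\has - \sas\|_q \;\leq\; \|\A(\has - \sas)\|_q.
\]

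Next I would control the right-hand side by the triangle inequality, writing $\A(\has - \sas) = (\A\has - \f) + (\f - \A\sas) = (\A\has - \f) + \e$, and therefore
\[
\|\A(\has - \sas)\|_q \;\leq\; \|\A\has - \f\|_q + \|\e\|_q.
\]
The first summand is exactly what Theorem~\ref{FSt_dant} controls. Since $\sas$ is $k$-sparse with $\|\sas\|_1 \leq \tau$, it lies in $\Delta_{\ell_0,\ell_1}(T,\tau)$ (assuming $T\geq k$, which is implicit since otherwise the RIP-$q$ conclusion is vacuous anyway), so it is a feasible competitor in the minimum appearing in Theorem~\ref{FSt_dant}. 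Evaluating at $\sas$ gives $\|\A\sas - \f\|_q = \|\e\|_q$, hence
\[
\|\A\has - \f\|_q \;\leq\; \|\e\|_q + \frac{{\rm DG}}{2\sqrt{T}}.
\]

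Chaining these two inequalities yields
\[
(1-\epsilon)\,\|\has - \sas\|_q \;\leq\; 2\,\|\e\|_q + \frac{{\rm DG}}{2\sqrt{T}},
\]
and dividing by $1-\epsilon$ gives the claimed bound. The $T$-sparsity of $\has$ and the norm bound $\|\has\|_1 \leq \tau$ are carried over verbatim from Theorem~\ref{FSt_dant}. There is no genuinely hard step here; the only subtle point worth flagging is the sparsity-addition observation that $\has - \sas$ has at most $k+T$ nonzero entries, which is exactly why the hypothesis is phrased as a $(k+T,\epsilon)$ RIP-$q$ assumption rather than, say, a $(2k,\epsilon)$ or $(2T,\epsilon)$ assumption.
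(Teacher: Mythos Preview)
Your proposal is correct and follows essentially the same route as the paper: use the $(k+T)$-sparsity of $\has-\sas$ to invoke the RIP-$q$ lower bound, apply the triangle inequality to split off $\|\e\|_q$, and then plug $\sas$ into the bound of Theorem~\ref{FSt_dant} to turn the residual into another $\|\e\|_q$. Your remark about needing $T\geq k$ for $\sas\in\Delta_{\ell_0,\ell_1}(T,\tau)$ is a fair caveat the paper glosses over (though the proof of Theorem~\ref{FSt_dant} in fact establishes the stronger bound over all of $\Delta_{\ell_1}(\tau)$, which would remove this concern).
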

    \begin{proof}
   Since $\has$ is $T$-sparse and $\sas$ is $k$-sparse, $\has-\sas$ is $(T+k)$-sparse. Therefore, it follows from the RIP-$q$ property of the sensing matrix that 
   \begin{align}
 & (1-\epsilon)\|\has-\sas\|_q \leq \|\A(\has-\sas)\|_q \leq \|\A\has-\f\|_q +\|\e\|_q \\\nonumber &\leq 
\|\A\sas-\f\|_q +  \frac{{\rm DG}}{2\sqrt{T}}+\|\e\|_q= 2\|\e\|_q+ \frac{{\rm DG}}{2\sqrt{T}}.
   \end{align}
    \end{proof}

\section{The CLASH Algorithm}\label{sec: CLASH}

\subsection{Hard Thresholding Formulations of Sparse Approximation}

As already stated, solving (\ref{opt:00}) is NP-hard and exhaustive search over $ \binom{\n}{k} $ possible support set configurations of the $ k $-sparse solution is mandatory. Contrary to this brute-force approach, hard thresholding algorithms \cite{recipes,SP,cosamp,Blumensath_iterativehard,Foucart_hardthresholding} navigate through the low-dimensional $ k $-sparse subspaces, {\it pursuing} an appropriate support set such to minimize the data error in (\ref{opt:01}). To achieve this, these approaches apply greedy support set selection rules to iteratively compute and refine a putative solution $ \as_{i} $ using only first-order information $ \nabla f(\as_{i-1}) $ at each iteration $ i $.

Subspace Pursuit (SP) \cite{SP} algorithm is a combinatorial greedy algorithm that borrows both from Orthogonal Matching Pursuit (OMP) and Iterative Hard Thresholding \cite{Blumensath_iterativehard} (IHT) methods. A sketch of the algorithm is given in Algorithm 2. The basic idea behind SP consists in looking for a good support set by iteratively collecting an extended candidate support set $ \widehat{\mathcal{A}}_i $ with $ |\widehat{\mathcal{A}}_i| \leq 2k $ (Step 4) and then finding the $ k $-sparse vector $ \as_{i+1} $ that best fits the measurements within the restricted support set $ \widehat{\mathcal{A}}_i $, i.e., the support set $ \as_{i+1} $ satisfies $ \mathcal{A}_{i+1} \triangleq \text{supp}(\as_{i+1}) \subseteq \widehat{\mathcal{A}}_i $ (Steps 5-6).

\begin{algorithm}
   \caption{Subspace Pursuit Algorithm}\label{algo: class}
 {\bfseries Input:} $\obs$, $\A$, $ k $, MaxIter.
 {\bfseries Output:} $ \hat{\as} \leftarrow \argmin_{\mathbf{v}: \text{supp}(\mathbf{v}) \subseteq \mathcal{A}_i} \vectornorm{\obs - \A \mathbf{v}}_2^2 $
\end{algorithm}



In \cite{foucart2010sparse}, Foucart improves the initial RIP conditions of SP algorithm, which we present here as a corollary:

\begin{corollary}[SP Iteration Invariant]\label{eq: CoSaMP2}
SP algorithm satisfies the following recursive formula:
\begin{align}
\vectornorm{\as_{i+1} - \xtrue}_2 \leq \rho \vectornorm{\as_{i} - \xtrue}_2 + c \vectornorm{\noise}_2,
\end{align} where $ c = \sqrt{\frac{2(1+3\delta_{3k}^2)}{1-\delta_{3k}}} + \frac{\sqrt{(1+3\delta_{3k}^2)(1+\delta_{2k})}}{1-\delta_{3k}} + \sqrt{3(1+\delta_{2k})} $ and $ \rho < 1 $ given that $ \delta_{3k} < 0.38427 $.
\end{corollary}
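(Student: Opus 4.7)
The plan is to trace the error $\|\as_{i+1} - \xtrue\|_2$ through the two stages of each SP iteration --- the support expansion in Step~4, and the least-squares projection followed by hard thresholding in Steps~5--6 --- and compose the resulting inequalities. Set $T = \mathrm{supp}(\xtrue)$, $\vr_i = \obs - \A \as_i = \A(\xtrue - \as_i) + \noise$, and $\widehat{\mathcal{A}}_i = \mathcal{A}_i \cup \Omega_i$, where $\Omega_i$ collects the $k$ indices of largest magnitude in $\A^\top \vr_i$. Throughout I would rely on the RIP-induced inequalities $\|\A_S^\top \A_{S'}\|_{2 \to 2} \leq \delta_{|S \cup S'|}$ for disjoint $S, S'$ and $\|(\A_S^\top \A_S - I)\vv\|_2 \leq \delta_{|S|} \|\vv\|_2$, applied only to index sets of total size at most $3k$.

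The first step is a support-capture estimate of the form
\begin{align*}
\|\xtrue|_{T \setminus \widehat{\mathcal{A}}_i}\|_2 \leq a(\delta_{3k})\,\|\as_i - \xtrue\|_2 + b(\delta_{3k})\,\|\noise\|_2.
\end{align*}
The observation driving this is that $\Omega_i$ is the top-$k$ thresholding of $\A^\top \vr_i$ restricted away from $\mathcal{A}_i$, so the energy of $\A^\top \vr_i$ on any competing $k$-set inside $(T \cup \Omega_i) \setminus \mathcal{A}_i$ is controlled. Expanding $\vr_i = \A(\xtrue - \as_i) + \noise$ and invoking RIP on the cross-Gram blocks yields the displayed inequality with explicit functions $a, b$ of $\delta_{3k}$.

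The second step bounds the error incurred by the least-squares projection $\vu$ onto $\widehat{\mathcal{A}}_i$ and of the subsequent hard thresholding to $\as_{i+1}$. The normal equations $\A_{\widehat{\mathcal{A}}_i}^\top(\A\vu - \obs) = 0$ together with RIP on $\widehat{\mathcal{A}}_i \cup T$ (of size $\leq 3k$) give a bound on $\|\vu - \xtrue\|_2$ in terms of $\|\xtrue|_{T \setminus \widehat{\mathcal{A}}_i}\|_2$ and $\|\noise\|_2$, and the elementary fact that best-$k$-sparse approximation inside a fixed support costs at most a factor of $2$ transfers the control to $\|\as_{i+1} - \xtrue\|_2$. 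Chaining with the first-step bound yields $\|\as_{i+1} - \xtrue\|_2 \leq \rho(\delta_{3k})\|\as_i - \xtrue\|_2 + c(\delta_{3k},\delta_{2k})\|\noise\|_2$, where the particular shape of $c$ in the corollary comes from assembling the noise terms produced at each stage.

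The main obstacle is driving the contraction factor $\rho$ strictly below $1$ while keeping the RIP order at $3k$ and requiring only $\delta_{3k} < 0.38427$. Naive multiplicative chaining of the two stages forces much smaller thresholds, as in the original Dai--Milenkovic analysis. Foucart's sharpening, which I would import, replaces the independent $\ell_2$ bounds on the two disjoint contributions $\xtrue|_{T \setminus \widehat{\mathcal{A}}_i}$ and $\xtrue|_{\mathcal{A}_i \setminus T}$ (entering the residual on $\mathcal{A}_i$) with a combined quadratic inequality that exploits the top-$k$ selection property simultaneously for both, so that the effective dependence on $\delta_{3k}$ enters through $\delta_{3k}^2$ at the crucial place --- this is exactly the factor $1 + 3\delta_{3k}^2$ appearing under the square root in the displayed $c$. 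Carrying this quadratic recombination through, and simultaneously tracking the constant in front of $\|\noise\|_2$ in the form stated, is where I expect the bookkeeping to be most delicate.
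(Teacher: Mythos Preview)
The paper does not actually prove this corollary: it is quoted verbatim as a result of Foucart \cite{foucart2010sparse}, introduced only by the sentence ``In \cite{foucart2010sparse}, Foucart improves the initial RIP conditions of SP algorithm, which we present here as a corollary,'' with no accompanying argument. So there is no in-paper proof to compare against.

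Your outline is a faithful sketch of Foucart's two-stage analysis (support-capture bound for $T\setminus\widehat{\mathcal{A}}_i$, then least-squares-plus-thresholding bound), and you correctly identify that the $1+3\delta_{3k}^2$ factor and the threshold $0.38427$ come from Foucart's quadratic recombination rather than a naive multiplicative chaining. Since you explicitly say you are importing Foucart's sharpening, your approach is exactly the one the paper defers to. The only caveat is that your write-up remains at the level of a plan: the specific constants in $c$ and the exact value $0.38427$ will not fall out without reproducing Foucart's computations in full, so if an independent derivation is required you would need to carry those through rather than cite them.
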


\subsection{Algorithm Description}
In this section, we expose \class algorithm, a Subspace Pursuit \cite{SP} variant, as a running example for our subsequent developments. We underline that norm constraints can be also incorporated into alternative state-of-the-art hard thresholding frameworks  \cite{recipes,SP,cosamp,Blumensath_iterativehard,Foucart_hardthresholding}. 

\begin{algorithm}
   \caption{The \class Algorithm}\label{algo: class}
    {\bfseries Input:} $\obs$, $\A$, $ \Delta_{\ell_0, \ell_1}(\sparsity, \tau) $, Tolerance, MaxIterations
 {\bfseries Output: $\signal_i$.}
\end{algorithm}
\vspace{-0.1cm}

The \class algorithm approximates $\bestsignal $ according to the optimization formulation \eqref{spexact} where $q=2$. We provide a pseudo-code of an example implementation of \class in Algorithm \ref{algo: class}. To complete the $i$-th iteration, \textsc{Clash} initially identifies a $2k$ extended support set $\widehat{\mathcal{A}}_i$ to explore via the {\it Active set expansion} step (Step 1)---the set $\widehat{\mathcal{A}}_i$ is constituted by the union of the support $\mathcal{A}_i$ of the current solution $\signal_i$ and an additional $k$-sparse support where the projected gradient onto $ \Delta_{\ell_0}(k) $ can make most impact on the loading vector, {\it complementary} to $\mathcal{A}_i$. Given $\widehat{\mathcal{A}}_i$, the {\it Greedy descent with least absolute shrinakge} step (Step 2) solves a least-squares problem over {\it $\ell_1$-norm constraint} to decrease the data error $f(\signal)$, restricted over the active support set $\widehat{\mathcal{A}}_i$. In sequence, we project the $2k$-sparse solution of Step 2 onto $ \Delta_{\ell_0}(k) $ to arbitrate the active support set via the {\it Combinatorial selection} step (Step 3). Finally, \class de-biases the result on the putative solution support using the {\it De-bias} step (Step 4).

\subsection{The CLASH Guarantees}

\class iterations satisfy the following worst-case guarantee:

\begin{theorem}\label{thm: iteration invariant}[Iteration invariant] Let $ \bestsignal $ be the true solution. Then, the $i$-th iterate $\signal_i$ of \class satisfies the following recursion \vspace{-0.1cm}
\begin{align}
\vectornorm{\signal_{i+1} - \bestsignal}_2 &\leq \rho \vectornorm{\signal_i - \bestsignal}_2 + c_1(\delta_{2\sparsity}, \delta_{3\sparsity})\vectornorm{\noise}_2, ~\text{where} \label{eq:59b} \\
c_1(\delta_{2\sparsity}, \delta_{3\sparsity}) &\triangleq \frac{1}{\sqrt{1-\delta_{2\sparsity}^2}}\Bigg(\sqrt{1+3\delta_{3\sparsity}^2}\Big(\sqrt{\frac{2(1+\delta_{3\sparsity})}{1-\delta_{3\sparsity}^2}} \nonumber \\  &+ \frac{\sqrt{1+\delta_{2\sparsity}}}{1-\delta_{3\sparsity}}\Big) + \sqrt{3(1+\delta_{2\sparsity})}\Bigg) + \frac{\sqrt{1+\delta_\sparsity}}{1-\delta_{2\sparsity}},  \vspace{-0.1cm}
\end{align}  and $\rho \triangleq \frac{\delta_{3\sparsity} + \delta_{2\sparsity}}{\sqrt{1-\delta_{2\sparsity}^2}} \sqrt{\frac{1+3\delta_{3\sparsity}^2}{1-\delta_{3\sparsity}^2}}$. Moreover, when $\delta_{3\sparsity} < 0.3658$, the iterations are contractive (i.e., $ \rho < 1 $).
\end{theorem}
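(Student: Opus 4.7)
The plan is to adapt Foucart's iteration analysis of Subspace Pursuit (which underlies \coro{eq: CoSaMP2}) step by step to the four stages of \class, with the essential modification being forced by the $\ell_1$-constrained least squares of Step~2. Introduce the notation, for iteration $i$: $\mathcal{A}^{\ast}=\mathrm{supp}(\bestsignal)$ with $|\mathcal{A}^{\ast}|\le \sparsity$, $\mathcal{A}_i=\mathrm{supp}(\signal_i)$, the $2\sparsity$-support $\widehat{\mathcal{A}}_i$ produced by Step~1, the Step-2 iterate $\mathbf{v}_i$ (supported on $\widehat{\mathcal{A}}_i$ with $\vectornorm{\mathbf{v}_i}_1 \le \tau$), the hard-thresholded $\sparsity$-sparse output $\mathbf{w}_i$ of Step~3, and finally $\signal_{i+1}$, the debiased LS solution over $\mathcal{W}\doteq\mathrm{supp}(\mathbf{w}_i)$. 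The target recursion will be assembled from per-step bounds chained in the order Step~1~$\to$~Step~2~$\to$~Step~3~$\to$~Step~4, each invoking RIP of order $2\sparsity$ or $3\sparsity$.

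Step~1 is identical to SP, so comparing inner products against the top-$\sparsity$ entries of $\A^\top(\obs-\A\signal_i)$ outside $\mathcal{A}_i$ and invoking RIP-$3\sparsity$ gives an active-set bound of the form $\vectornorm{\bestsignal_{(\widehat{\mathcal{A}}_i)^c}}_2 \leq c_a \vectornorm{\signal_i - \bestsignal}_2 + c_b \vectornorm{\noise}_2$, in which the coefficient $c_a$ is precisely $\tfrac{\delta_{3\sparsity}+\delta_{2\sparsity}}{\sqrt{1-\delta_{2\sparsity}^2}}\sqrt{(1+3\delta_{3\sparsity}^2)/(1-\delta_{3\sparsity}^2)}$, the quantity that will reappear as $\rho$. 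The departure from SP comes in Step~2: $\bestsignal_{\widehat{\mathcal{A}}_i}$ lies in the feasible set of the $\ell_1$-constrained LS (supported on $\widehat{\mathcal{A}}_i$ with $\ell_1$-norm at most $\tau$), so optimality of $\mathbf{v}_i$ yields the feasibility estimate $\vectornorm{\A\mathbf{v}_i-\obs}_2 \leq \vectornorm{\A\bestsignal_{\widehat{\mathcal{A}}_i}-\obs}_2 \leq \sqrt{1+\delta_\sparsity}\,\vectornorm{\bestsignal_{(\widehat{\mathcal{A}}_i)^c}}_2 + \vectornorm{\noise}_2$. A triangle inequality together with the lower RIP-$2\sparsity$ bound applied to the $2\sparsity$-sparse vector $\mathbf{v}_i-\bestsignal_{\widehat{\mathcal{A}}_i}$ then expresses $\vectornorm{\mathbf{v}_i-\bestsignal_{\widehat{\mathcal{A}}_i}}_2$ as a linear combination of $\vectornorm{\bestsignal_{(\widehat{\mathcal{A}}_i)^c}}_2$ and $\vectornorm{\noise}_2$, with the inversion being responsible for the ubiquitous $(1-\delta_{2\sparsity}^2)^{-1/2}$ prefactor appearing throughout $\rho$ and $c_1$.

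Step~3 follows from the standard best-$\sparsity$-term argument: since $\bestsignal_{\widehat{\mathcal{A}}_i}$ is $\sparsity$-sparse and supported on $\widehat{\mathcal{A}}_i$, $\vectornorm{\mathbf{w}_i - \mathbf{v}_i}_2 \leq \vectornorm{\bestsignal_{\widehat{\mathcal{A}}_i} - \mathbf{v}_i}_2$, whence $\vectornorm{\mathbf{w}_i - \bestsignal}_2 \leq 2\vectornorm{\mathbf{v}_i - \bestsignal_{\widehat{\mathcal{A}}_i}}_2 + \vectornorm{\bestsignal_{(\widehat{\mathcal{A}}_i)^c}}_2$. Step~4 is a plain LS on $\mathcal{W}$: expanding the normal equations $\A_{\mathcal{W}}^\top(\A\signal_{i+1}-\obs)=0$ and applying RIP-$2\sparsity$ to $\mathcal{W}\cup(\mathcal{A}^{\ast}\setminus\mathcal{W})$ contributes the isolated noise-amplification term $\tfrac{\sqrt{1+\delta_\sparsity}}{1-\delta_{2\sparsity}}\vectornorm{\noise}_2$ that sits outside the main bracket of $c_1$. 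Composing these four bounds, back-substituting the Step~1 estimate of $\vectornorm{\bestsignal_{(\widehat{\mathcal{A}}_i)^c}}_2$, and collecting coefficients of $\vectornorm{\signal_i-\bestsignal}_2$ and $\vectornorm{\noise}_2$ yields the stated recursion. The main obstacle is arithmetic rather than conceptual: the feasibility inequality used in Step~2 is strictly weaker than the normal-equations argument available to SP, and propagating it through Steps~2--4 both introduces the $(1-\delta_{2\sparsity}^2)^{-1/2}$ amplification and slightly degrades the admissible RIP threshold from SP's $0.38427$ to \class's $0.3658$, which is finally confirmed by substituting $\delta_{2\sparsity}\le\delta_{3\sparsity}$ in $\rho$ and solving $\rho(\delta_{3\sparsity},\delta_{3\sparsity})=1$ numerically.
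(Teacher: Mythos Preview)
The paper does not actually prove this theorem in-text; immediately after stating it, the authors write only that ``A detailed proof of Theorem~\ref{thm: iteration invariant} can be found in \cite{clash}.'' There is therefore no in-paper argument to compare your proposal against. Your overall plan---tracing Foucart's four-stage Subspace Pursuit analysis and replacing the unconstrained normal-equations estimate of Step~2 by the feasibility inequality $\vectornorm{\A\mathbf{v}_i-\obs}_2 \le \vectornorm{\A\bestsignal_{\widehat{\mathcal{A}}_i}-\obs}_2$, valid because $\bestsignal_{\widehat{\mathcal{A}}_i}$ is feasible for the $\ell_1$-constrained least-squares subproblem---is precisely the route the cited reference takes, and is the natural one.

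One bookkeeping point to tighten before you write this out in full: you assert that the Step~1 coefficient $c_a$ on $\vectornorm{\signal_i-\bestsignal}_2$ already equals the final contraction factor $\rho$, and then two sentences later say that the $(1-\delta_{2\sparsity}^2)^{-1/2}$ prefactor in $\rho$ is produced by the RIP inversion in Step~2. These two claims are inconsistent---the factor cannot both be present already after Step~1 and be introduced in Step~2. In the actual chaining, the active-set-expansion step contributes only part of the product defining $\rho$; the remaining pieces (including the $(1-\delta_{2\sparsity}^2)^{-1/2}$ and the $\sqrt{(1+3\delta_{3\sparsity}^2)/(1-\delta_{3\sparsity}^2)}$ factors) accrue as you compose Steps~2--4 and back-substitute. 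This does not affect the soundness of your strategy, only the attribution of constants; when you carry out the argument, track each factor step by step rather than asserting the final form of $\rho$ at Step~1.
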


A detailed proof of Theorem \ref{thm: iteration invariant} can be found in \cite{clash}.
%
Theorem \ref{thm: iteration invariant} shows that the isometry requirements of \class are competitive with those of mainstream hard thresholding methods, such as SP, even though \class incorporates the $\ell_1$-norm constraints---furthermore, we observe improved signal reconstruction performance compared to these methods, as shown in the Experiments section.

\section{Experiments}\label{sec: exp}
In this section, we provide experimental results to demonstrate the performances of the {\it \textsc{GAME}} and \class Algorithms.

\subsection{Performance of the $\ell_\infty$ {\it \textsc{GAME}} algorithm}
In this experiment, we fix $\n=1000$, $\mm=200$ and $k=20$, and generate a $200\times 1000$ Gaussian matrix $\A$. Each experiment is repeated independently $50$ times. We compare the performance of the $\ell_\infty$ {\it \textsc{GAME}} algorithm, which approximately solves the non-convex problem \begin{equation}\label{spdantz}
\begin{aligned}
	& \underset{\as\in\Delta_{\ell_0, \ell_1}(k,\tau)}{\text{minimize}}
	& & \|\A^\top\A\as-\A^\top\f\|_\infty	
\end{aligned}
\end{equation}  with state-of-the-art Dantzig Selector solvers \cite{dantzig,homotopy} that solve linear optimization
\begin{equation}\label{spdantz2}
\begin{aligned}
	& \underset{\as\in\Delta_{\ell_1}(\tau)}{\text{minimize}}
	& & \|\A^\top\A\as-\A^\top\f\|_\infty	
\end{aligned}
\end{equation}

The compressive measurements were generated in the presence of white Gaussian noise. The noise vector consists of $\mm$ iid ${\cal N}(0,\sigma^2)$ elements, where $\sigma$ ranges from $10^{-3.5}$ to $10^{-0.5}$. Figure~\ref{fig_dantz} compares the data-domain $\ell_2$-error ($\|\sas-\has\|_2/\|\sas\|_2$) of the {\it \textsc{GAME}} algorithm with the error of $\ell_1$-magic algorithm \cite{l1magic} and the Homotopy algorithm \cite{salman} which are state-of-the-art Dantzig Selector optimizers. As illustrated in Figure~\ref{fig_dantz}, as $\sigma$ increases to $10^{-3}$, the {\it \textsc{GAME}} algorithm outperforms the $\ell_1$-magic and Homotopy algorithms.


\begin{figure}[t]
{
\centerline{\includegraphics[scale=0.3] {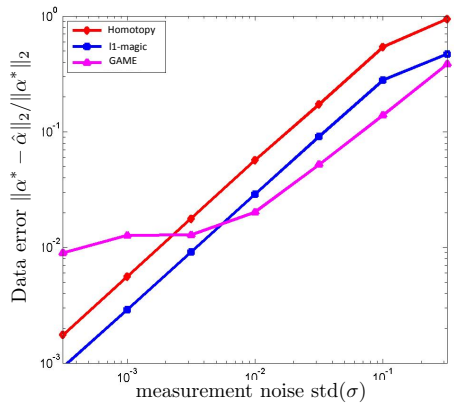}}
   \label{fig_dantz_2}
 }

\caption{Signal approximation experiments with $\ell_1$-magic, Homotopy, and {\it \textsc{GAME}} algorithms. The measurement noise standard deviation ranges from $10^{-3.5}$ to $10^{-0.5}$, and the approximation error is measured as $\|\sas-\has\|_2/$.}\label{fig_dantz}
\end{figure}

\subsection{Performance of \textsc{Clash} Algorithm}

\textbf{Noise resilience:} We generate random realizations of the model $ \obs = \A \bestsignal $ for $ \dimension = 1000 $, $\numsam = 305 $ and $\sparsity = 115 $ where $ \sparsity $ is known a-priori and $ \bestsignal $ admits the simple sparsity model. We construct $\bestsignal$ as a $k$-spare vector with iid ${\cal N}(0,1)$ elements with $\|\bestsignal\|_2 = 1$. We repeat the same experiment independently for $50$ Monte-Carlo iterations. In this experiment, we examine the signal recovery performance of \class compared to the following state-of-the-art methods: $ i) $ Lasso (\ref{opt:01}) as a projected gradient method, $ ii) $ Basis Pursuit \cite{BPDN} using SPGL1 implementation \cite{BergFriedlander2008} and, $ iii) $ Subspace Pursuit \cite{SP}. We test the recovery performance of the aforementioned methods for various noise standard deviations -- the empirical results are depicted in Figure \ref{fig:clash}. We observe that the combination of hard thresholding with norm constraints significantly {\it improves} the signal recovery performance over both convex- and combinatorial-based approaches.

\begin{figure}[t]
{
\centerline{\includegraphics[scale=0.35] {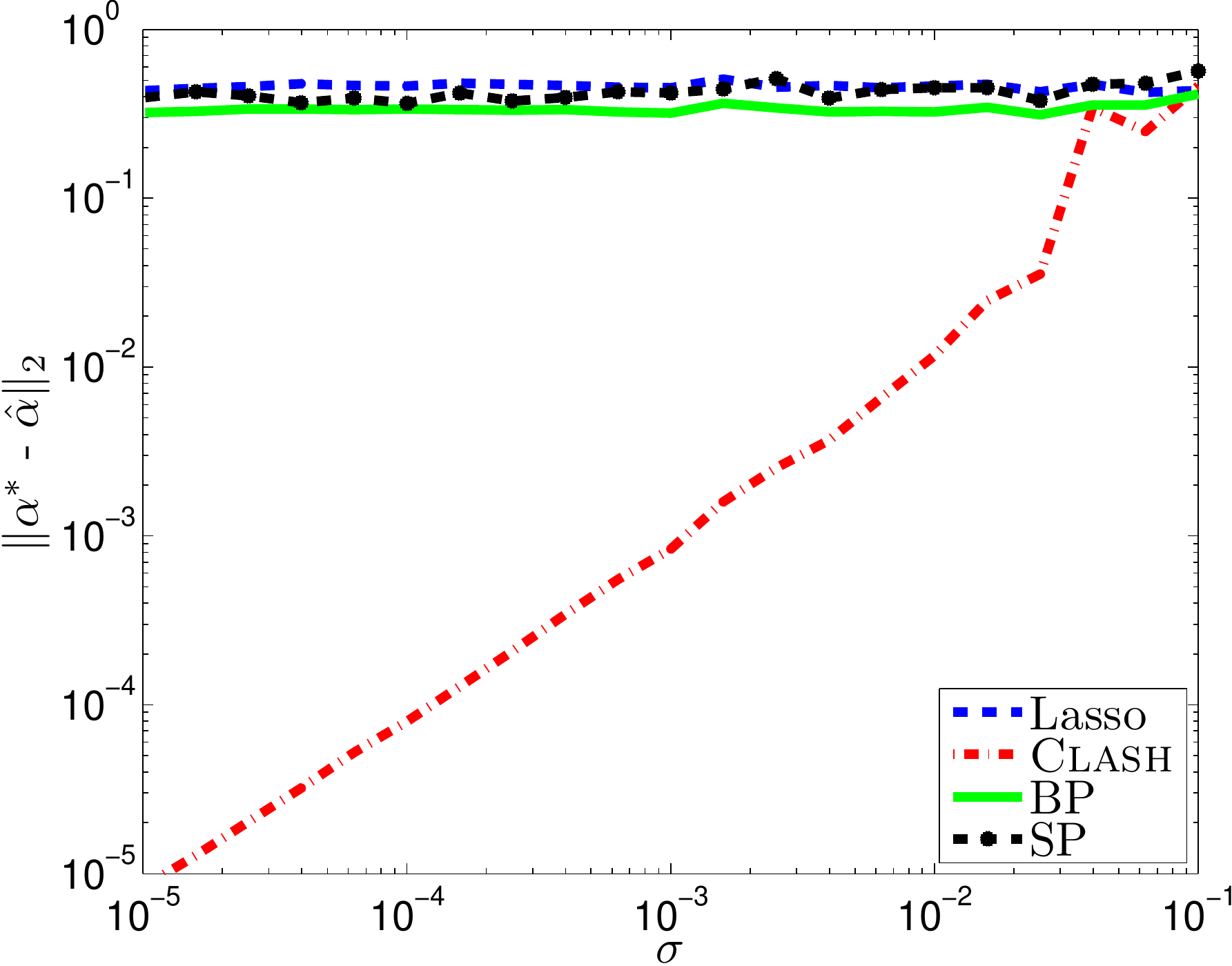}}
}
\caption{Signal approximation experiments with \textsc{Clash}, Lasso, and BP algorithms. The measurement noise standard deviation ranges from $10^{-5}$ to $10^{-1}$, and the approximation error is measured as $\|\sas-\has\|_2$.}{\label{fig:clash}}
\end{figure}

{\bf Improved recovery using} \textsc{Clash}{\bf :} We generate random realizations of the model $ \obs = \A \bestsignal + \mathbf{n} $ for $ \dimension = 500 $, $\numsam = 160 $ and $\sparsity = \lbrace 57, 62 \rbrace $ for the noisy and the noiseless case respectively, where $ \sparsity $ is known a-priori. We construct $\bestsignal$ as a $k$-spare vector with iid ${\cal N}(0,1)$ elements with $\|\bestsignal\|_2 = 1$.  In the noisy case, we assume $ \|\mathbf{n}\|_2 = 0.05 $. We perform $500$ independent Monte-Carlo iterations. We then sweep $\tau$ and then examine the signal recovery performance of \class compared to the same methods above. Note that, if $ \tau $ is large, norm constraints have no impact in recovery and \class must admit identical performance to SP.

Figure \ref{clash_recovery} illustrates that the combination of hard thresholding with norm constraints can {\it improve} the signal recovery performance significantly over convex-only and hard thresholding-only methods.  \class perfectly recovers the signal when the regularization parameter is close to $\|\alpha^\ast\|_1 $. When $\tau \ll \vectornorm{\bestsignal}_1$ or $\tau \gg \vectornorm{\bestsignal}_1$, the performance degrades.

\begin{figure}[t]
\begin{tabular}{cc}
\centerline{\subfigure{\includegraphics[scale=0.34]{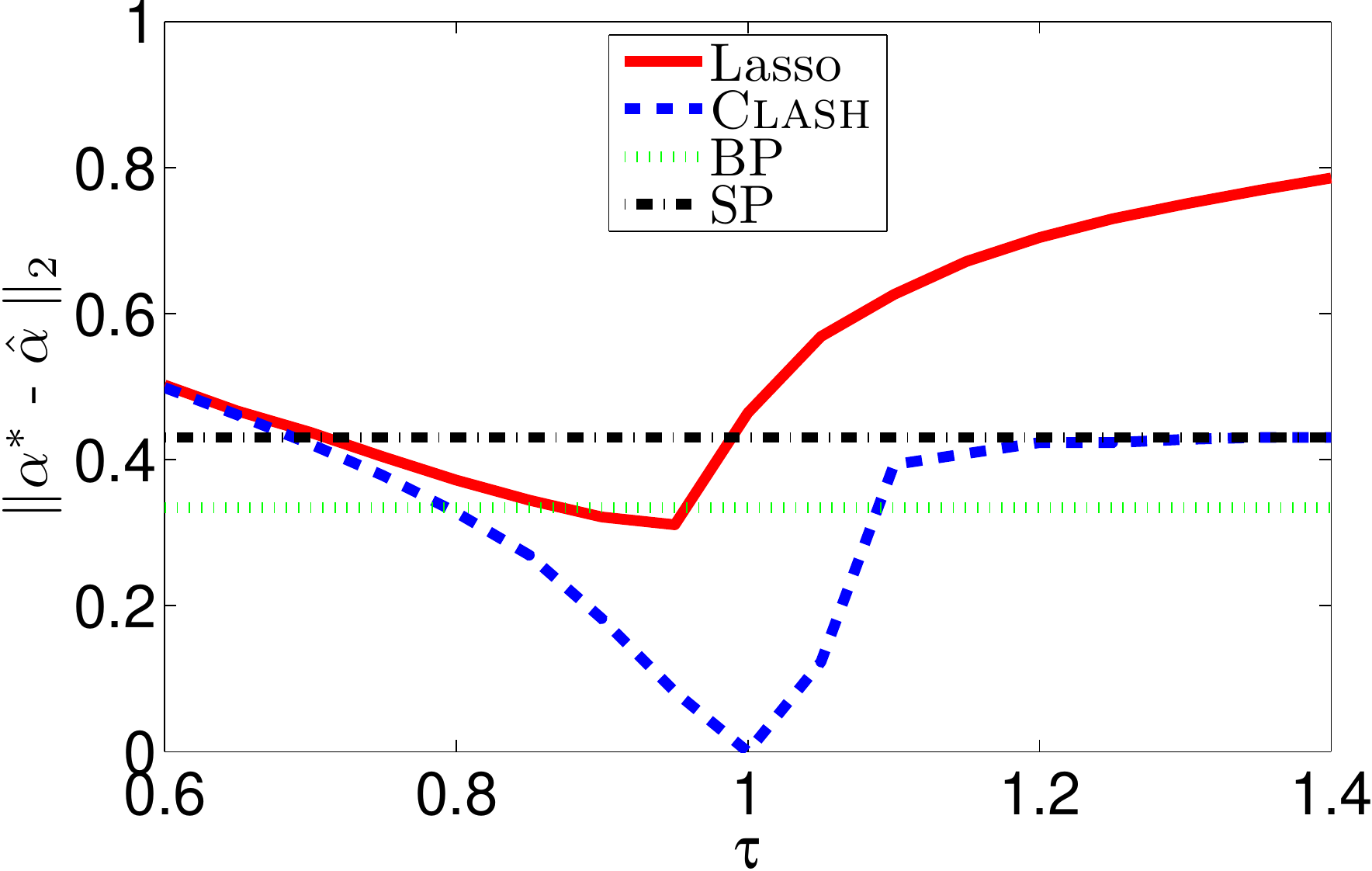}}
\hfill
\subfigure{\includegraphics[scale=0.34]{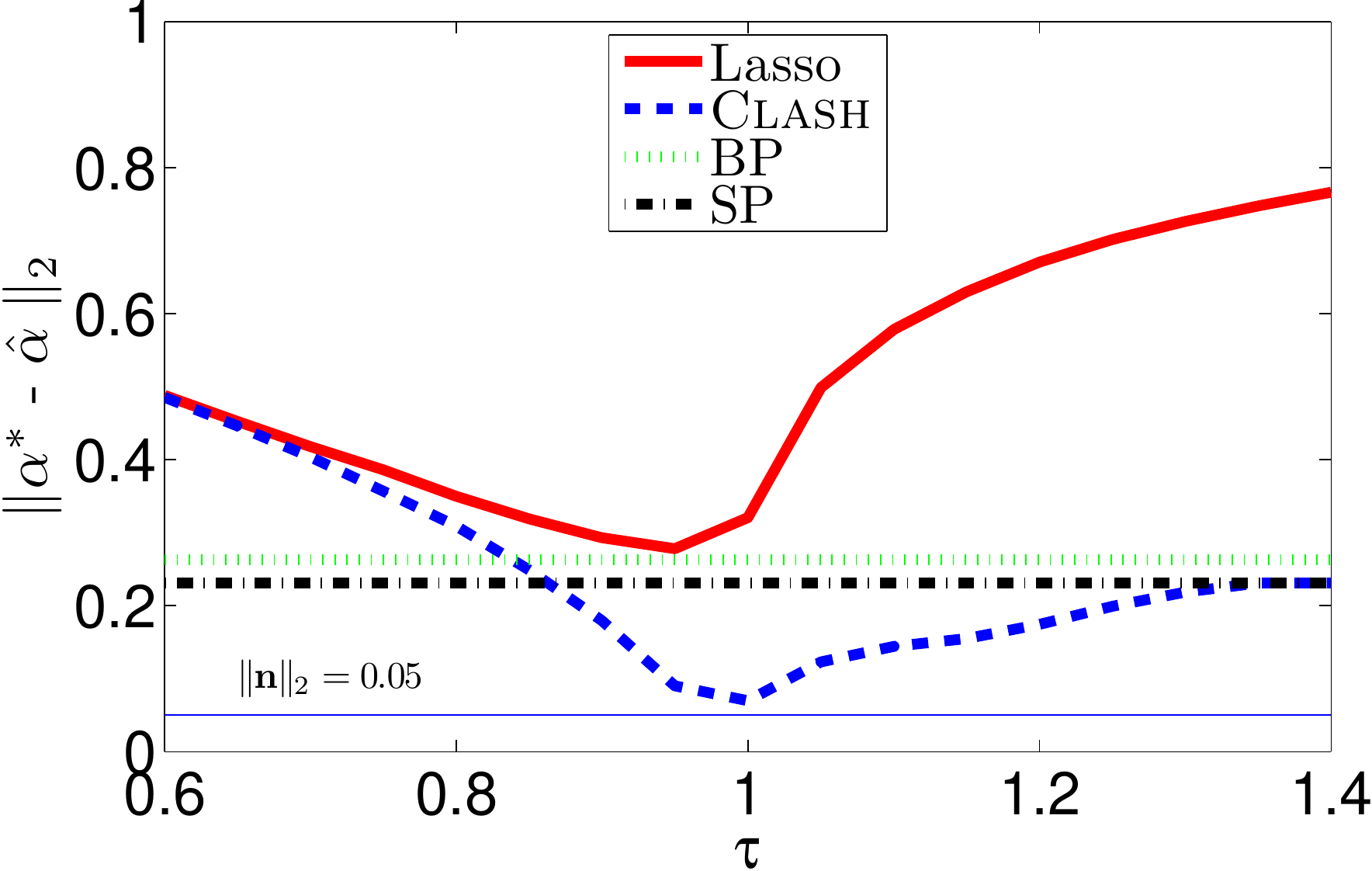}}}
\end{tabular}
\caption{Improved signal recovery using \textsc{Clash}.}{\label{clash_recovery}}
\end{figure} 
\section{Conclusions}\label{sec: conclusions}
We discussed two sparse recovery algorithms that explicitly leverage convex $\ell_1$ and non-convex $\ell_0$ priors jointly. While the $\ell_1$ prior is conventionally motivated as the ``convexification'' of the $\ell_0$ prior, we saw that this interpretation is incomplete: it actually is a convexification of the $\ell_0$-constrained set with a maximum scale. We also discovered that the interplay of these two---seemingly related---priors could lead to not only strong theoretical recovery guarantees from weaker assumptions than commonly used in sparse recovery, but also improved empirical performance over the existing solvers. To obtain our results, we reviewed some important topics from game theory, convex and combinatorial optimization literature. We believe that understanding and exploiting the interplay of such convex and non-convex priors could lead to radically new, scalable regression approaches, which can leverage decades of work in diverse theoretical disciplines.

\section*{Acknowledgements}
VC and AK's work was supported in part by the European Commission under Grant MIRG-268398, ERC Future Proof, DARPA KeCoM program $\#$ 11-DARPA-1055, and SNF $200021$-$132548$ grants. VC also would like to acknowledge Rice University for his Faculty Fellowship. SJ thanks Robert Calderbank and Rob Schapire for providing insightful comments.

\bibliographystyle{IEEEtran}
\small{
\bibliography{bibs/bib,bibs/bib2,bibs/bibgame,bibs/allbib}}



\endgroup


\end{document}